\documentclass[reqno]{amsart}

\usepackage[T1]{fontenc}
\usepackage[utf8]{inputenc}
\usepackage{times}
\usepackage{amsmath,amssymb,amsfonts,amsthm,mathtools}
\usepackage{bm}
\usepackage{booktabs}
\usepackage[top=1.5in,bottom=1.43in,left=1.25in,right=1.25in]{geometry}
\usepackage[colorlinks=true,citecolor=blue,linkcolor=blue,urlcolor=blue]{hyperref}

\allowdisplaybreaks
\newtheorem{theorem}{Theorem}[section]
\newtheorem{lemma}[theorem]{Lemma}
\newtheorem{proposition}[theorem]{Proposition}

\theoremstyle{definition}
\newtheorem{definition}[theorem]{Definition}

\theoremstyle{remark}
\newtheorem{remark}[theorem]{Remark}

\renewcommand{\leq}{\leqslant}

\renewcommand{\geq}{\geqslant}

\newcommand{\F}{\mathbb F}
\newcommand{\eqdef}{\triangleq}
\newcommand{\T}{\intercal}
\newcommand{\bch}{\mathrm{BCH}}
\newcommand{\bfa}{\bm{a}}
\newcommand{\bfb}{\bm{b}}
\newcommand{\bfs}{\bm{s}}
\newcommand{\bfv}{\bm{v}}
\newcommand{\bzero}{\bm{0}}
\DeclareMathOperator{\tr}{Tr}
\DeclareMathOperator{\wt}{wt}
\DeclareMathOperator{\supp}{supp}

\title{The Exact Second Generalized Covering Radius of Binary Primitive
Triple-Error-Correcting BCH Codes}

\author{Isaac Barouch Essayag}
\address{MIGAL -- Galilee Research Institute/Tel-Hai University of Kiryat
Shmona in the Galilee, P.O. Box 831, Kiryat Shmona 1101602, Israel}
\email{isaac.es@migal.org.il}

\author{Aryeh Lev Zabokritskiy (Yohananov)}
\address{Department of Computer Science, Tel-Hai University of Kiryat
Shmona in the Galilee, Israel}
\address{MIGAL -- Galilee Research Institute, P.O. Box 831,
Kiryat Shmona 1101602, Israel}
\email{yuhanalev@telhai.ac.il}
\subjclass[2020]{94B15, 11T71, 11G20}
\keywords{generalized covering radius, BCH codes, algebraic curves over
finite fields, deep cosets, computer-assisted proof}

\begin{document}

\begin{abstract}
Let \(C_m\eqdef\bch(3,m)\) be the binary primitive
triple-error-correcting BCH code of length \(2^m-1\).  We determine its
second generalized covering radius exactly: \(R_2(C_m)=8\) for every
\(m\geq5\).
Equivalently, every two-dimensional syndrome subspace is contained in
the binary span of at most eight parity-check columns, and eight columns
are necessary in the worst case.  This matches the known lower bound
and settles the parameter for the entire binary primitive
triple-error-correcting BCH family.
\end{abstract}

\maketitle

\section{Introduction}

The covering radius of a linear code measures the largest distance
from an ambient word to the code.  Its generalized version replaces
one ambient word by several words that must be covered simultaneously;
see~\cite{CohenHonkalaLitsynLobstein1997} for the classical background
and~\cite{ElimelechFirerSchwartz2021,ElimelechWeiSchwartz2022} for
generalized covering radii.

\begin{definition}
Let \(C\subseteq\F_2^n\) be a binary \([n,k]\) code with full-rank
parity-check matrix
\[
        H=(h_1,\ldots,h_n)\in\F_2^{(n-k)\times n}.
\]
For \(\bfv=(v_1,\ldots,v_n)\in\F_2^n\), its syndrome is
\(H\bfv^\T\).  Its support and Hamming weight are
\[
 \supp(\bfv)\eqdef\{i:v_i\neq0\},
 \qquad
 \wt(\bfv)\eqdef|\supp(\bfv)|.
\]
For \(t\geq1\), its \(t\)-th generalized covering radius is
\[
 R_t(C)\eqdef
 \min\left\{
 r:
 \begin{array}{l}
 \text{for every choice of }t\text{ syndromes }
 \bfs_1^\T,\ldots,\bfs_t^\T\in\F_2^{n-k}\\[-1mm]
 \text{there is }I\subseteq\{1,\ldots,n\}\text{ with }|I|\leq r\\[-1mm]
 \text{such that }\{\bfs_1^\T,\ldots,\bfs_t^\T\}\subseteq
 \operatorname{span}_{\F_2}\{h_i:i\in I\}
 \end{array}
 \right\}.
\]
\end{definition}

Thus \(R_1(C)\) is the ordinary covering radius, while \(R_2(C)\)
measures how efficiently two arbitrary syndromes can share
parity-check columns.  We say that a set
\(E\subseteq\{1,\ldots,n\}\) represents a syndrome \(\bfs^\T\) if
\[
        \bfs^\T=\sum_{i\in E}h_i.
\]
The least possible value of \(|E|\) is the \emph{coset weight} of the
syndrome, and a binary vector whose support is a minimum such set is a
coset leader.

Binary primitive BCH codes form a natural test family for this
question.  Let \(\bch(e,m)\) denote the binary primitive
narrow-sense BCH code of length \(2^m-1\) and designed distance
\(2e+1\).  For the triple-error-correcting code, put
\[
        C_m\eqdef\bch(3,m).
\]
\(\F_{2^m}^*\eqdef\F_{2^m}\setminus\{0\}\) denotes the nonzero field
elements.  A convenient parity-check column indexed by
\(x\in\F_{2^m}^*\) is
\[
        h(x)\eqdef(x,x^3,x^5)^\T.
\]
We identify the column \(h(x)\) with its field index \(x\).
Throughout, \(\F_{2^m}^3\) is viewed as a vector space over
\(\F_2\).  Expanding each field entry in a fixed
\(\F_2\)-basis turns the columns \(h(x)\) into a binary
parity-check matrix.  All column spans and sums below are over
\(\F_2\).
For \(m\geq5\), the binary cyclotomic cosets containing \(1,3,5\) are
distinct and each has size \(m\).  The matrix therefore has rank
\(3m\), and its syndrome space is \(\F_{2^m}^3\).
The BCH bound gives minimum distance at least seven.  General bounds
for the ordinary covering radius of long primitive BCH codes were
developed by Tiet\"av\"ainen and Cohen
~\cite{Tietavainen1987,Cohen1997PrimitiveBCH}.
The ordinary covering radius of \(\bch(3,m)\) is five
for every \(m\geq5\)~\cite{VanDerHorstBerger1976,
AssmusMattson1976,Helleseth1978}.  Covering two syndromes separately
therefore gives
\[
        R_2\!\left(\bch(3,m)\right)\leq10.
\]
Equivalently, every syndrome is represented by a set of at most five
parity-check columns, so two independent choices use at most ten columns.
On the other hand, Yohananov and Schwartz proved
\[
        R_2\!\left(\bch(e,m)\right)\geq3e-1
\]
whenever
\[
        2^{\lceil m/2\rceil}\geq2e-1
\]
~\cite[Theorem~3]{YohananovSchwartz2025}.  In particular,
\[
        R_2\!\left(\bch(3,m)\right)\geq8
        \qquad(m\geq5).
\]
The full problem for the triple-error-correcting family was therefore
to lower the unrestricted upper bound from ten to eight.

Recent work of \"Ozbudak and \"Ozt\"urk treats the restricted problem
in which both syndromes have zero first BCH coordinate.  They show that
such pairs are spanned by at most nine columns for odd \(m\geq11\) and
even \(m\geq20\)~\cite{OzbudakOzturk2026Second}.  Their construction
chooses two sets of five parity-check columns, one summing to each
syndrome, with a common column; their union therefore has size at most
nine.  For \(m\geq8\), our reduction shows that a counterexample to an
eight-column bound could occur only when the two syndromes and their sum
each require five columns and all three have zero first coordinate.  We
eliminate this remaining obstruction by finding two five-column
representations whose intersection contains at least two columns, so
their union has size at most eight.  This is precisely the extra
intersection needed to pass from nine columns to eight.

Related work has obtained exact and near-exact results for the third
generalized covering radius of the double-error-correcting BCH family
~\cite{OzbudakOzturk2026Third}, sharpened other parts of the
generalized-radius hierarchy for that family~\cite{XiongYip2026}, and
determined exact second generalized covering radii of other binary
cyclic-code families~\cite{LuoZhouMesnagerSagarYan2026}.

We now determine the unrestricted parameter for every \(m\geq5\).

\begin{theorem}
\label{thm:main}
For every integer \(m\geq5\),
\[
        R_2\!\left(\bch(3,m)\right)=8.
\]
\end{theorem}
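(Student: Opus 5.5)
The lower bound \(R_2(C_m)\geq8\) is the Yohananov--Schwartz bound quoted above (valid since \(2^{\lceil m/2\rceil}\geq5\) for \(m\geq5\)). It remains to show that every pair of syndromes \(\bfs_1,\bfs_2\in\F_{2^m}^3\) lies in the span of at most eight columns \(h(x)\). The plan is a reduction followed by an algebraic-curve argument, with small \(m\) handled by computer.

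\textbf{Reduction.} Write \(\bfs_3=\bfs_1+\bfs_2\). Any two of \(\bfs_1,\bfs_2,\bfs_3\) span the same space, so we may use whichever pair is cheapest.
\begin{itemize}
\item If some \(\bfs_i\) has coset weight at most three, we cover it with at most three columns and any other \(\bfs_j\) with at most five, for a total of at most eight.
\item Suppose two of them, say \(\bfs_1,\bfs_2\), have weight at most four. Taking representations \(E_1,E_2\) with \(|E_1|,|E_2|\leq4\) already gives at most eight columns.
\item So we may assume at least two of the three have weight exactly five. If the third has weight four, say \(\bfs_3\), we pair it with \(\bfs_1\): this needs \(4+5\) columns minus any overlap, so we must produce a representation of \(\bfs_1\) by five columns sharing a column with a four-column representation of \(\bfs_3\).
\end{itemize}
For the overlap I would use the standard coset-weight description. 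A syndrome \((a,b,c)\) with \(a\neq0\) is handled by substituting \(x\mapsto\) one prescribed column \(y\) and asking whether \((a,b,c)+h(y)\) has weight at most four. That amounts to a point count on a curve \(\{x_1+\dots+x_4=a+y,\ \sum x_i^3=b+y^3,\ \sum x_i^5=c+y^5\}\), whose number of \(\F_{2^m}\)-points is positive by Weil/Lang--Weil once \(m\geq8\). This is the mechanism by which the paper reduces, for \(m\geq8\), to the single residual case where all three syndromes have weight five and zero first coordinate \(a_1=a_2=a_3=0\).

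\textbf{Main case.} Here \(\bfs_i=(0,b_i,c_i)\). I will look for five-column representations \(E_1\ni u,v\) of \(\bfs_1\) and \(E_2\ni u,v\) of \(\bfs_2\), giving \(|E_1\cup E_2|\leq8\). Fix the common pair \(\{u,v\}\) as parameters. We then need \(\bfs_i+h(u)+h(v)\) to be a sum of three distinct columns avoiding \(u,v\), for \(i=1,2\) simultaneously. A triple sum \(x+y+z=\sigma_1\), \(x^3+y^3+z^3=\beta\), \(x^5+y^5+z^5=\gamma\) is, via Newton identities, equivalent to the elementary symmetric functions \(\sigma_2,\sigma_3\) solving a linear/quadratic system determined by \((\sigma_1,\beta,\gamma)\). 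The requirement is then that the cubic \(X^3+\sigma_1X^2+\sigma_2X+\sigma_3\) split over \(\F_{2^m}\) with distinct roots. Since \(\sigma_1=u+v\) is the same for both \(i\), each condition defines a cover of the \((u,v)\)-plane. We want the fibre product of the two "cubic splits completely" covers to have an \(\F_{2^m}\)-point off a degenerate locus.

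\textbf{Expected obstacle and small \(m\).} Restricting to a suitable line, e.g. \(v=u+\lambda\) with \(\lambda\) chosen, reduces this to a curve; better, one can keep a surface and apply Lang--Weil. The existence of a point then follows from absolute irreducibility of the relevant (Galois-closure) variety together with explicit degree/genus bounds giving positivity for \(m\geq8\) or so. I expect this absolute irreducibility — showing the splitting conditions for \(\bfs_1\) and \(\bfs_2\) are "independent" (their Galois groups are \(S_3\) and not linked) for all weight-five pairs — to be the hard step. It must use that \(\bfs_1,\bfs_2,\bfs_3\) are all deep (weight five), which excludes the degenerate configurations. Finally, for \(5\leq m\leq7\), and any \(m\) below the explicit Weil threshold, I would verify \(R_2\leq8\) by exhaustive computation. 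The search runs over pairs of syndromes modulo the symmetry group generated by \(x\mapsto\alpha x\) and Frobenius, and checks unions of representations, as the keywords' "computer-assisted proof" suggests.
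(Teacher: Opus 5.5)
Your overall architecture is the paper's: the Yohananov--Schwartz lower bound, a reduction to planes whose three nonzero syndromes are deep and zero-first, a prescribed shared pair that turns each five-column completion into a splitting cubic, point counting for large \(m\), and symmetry-reduced computation below the threshold. However, both the reduction and the main case have genuine holes.

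\textbf{The reduction.} You treat only the pattern \(455\) and then call the all-zero-first \(555\) plane the ``single residual case.'' A \(555\) plane containing a nonzero-first syndrome is not handled by your one-shared-column mechanism, which gives at best \(5+5-1=9\) there. What you need is that such planes do not exist. This requires the explicit Charpin--Helleseth--Zinoviev form \(\Phi(\rho,\eta)\), \(\tr(\eta)=1\), of the nonzero-first deep syndromes:
\begin{itemize}
\item If all three first coordinates are nonzero, scale to \(\bfa^\T=\Phi(1,\eta)\) and \(\bfb^\T=\Phi(\tau,\theta)\). Then \(M((\bfa+\bfb)^\T)=\tau^3(1+\tau)^2(\xi^2+\xi+\theta)\neq0\), where \(\xi=(\eta+\tau^2\theta)/(\tau(1+\tau))\), so the sum is not deep.
\item If exactly one first coordinate is zero, that syndrome equals \(\Phi(\rho,\eta)+\Phi(\rho,\theta)=(0,\rho^3\varepsilon,\rho^5(\varepsilon+\varepsilon^2))^\T\) with \(\tr(\varepsilon)=0\). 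It is represented by the three locators \(\rho,\rho\zeta,\rho(\zeta+1)\), where \(\zeta^2+\zeta=\varepsilon\), contradicting depth.
\end{itemize}
Your \(455\) step is also unsupported. No degree or genus estimate backs ``positive by Weil/Lang--Weil once \(m\geq8\).'' Worse, the claim cannot hold uniformly in \(y\): for \(y=\rho\), the first coordinate of \(\bfs_1\), the residual \(\bfs_1+h(\rho)\) is deep, so no five-column leader of \(\bfs_1\) contains \(\rho\). The correct statement is that every \(y\notin\{0,\rho\}\) leaves a residual of weight exactly four. It follows from the identity \(M(\bfs^\T+h(y))=\rho^6\eta(\vartheta+1)(\vartheta^2+\vartheta+\eta)\), where \(\vartheta=1+y/\rho\). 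You also need that the weight-four syndrome is nonzero-first, since a zero-first syndrome of weight at most four has weight three. Hence one deep syndrome, say \(\bfs_1\), is nonzero-first, and any column of the four-column leader of \(\bfs_3\) other than \(\rho\) works.

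\textbf{The main case.} You stop exactly at the step that carries the proof, and parts of your expectation are off.
\begin{itemize}
\item Irreducibility genuinely fails on a degenerate locus. If \(\kappa=\beta+\alpha^2+\alpha\) vanishes for the normalized syndrome \((0,\alpha,\beta)^\T\), the completing cubic is \((y+1)(y^2+y+D)\), and every split repeats the prescribed locator \(1+s\). The normalization must avoid this; the paper fixes one locator through a scaling \(r\) and varies \(s\).
\item Independence of the two \(S_3\)-covers does not come from depth. Depth is used only afterwards, to make the completing roots nonzero, distinct, and off the prescribed pair. Independence comes from the disjoint branch loci \(s^2+s+a_r\) and \(s^2+s+c_r\) when the third-power coordinates differ.
\item When the third-power coordinates coincide, independence comes instead from the sum of the two reduced quadratic characters having nonzero \(D^{-5}\)-coefficient \(L\). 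This coefficient can vanish, which forces another choice of \(r\) or a separate genus computation.
\item The resulting genus is \(181\) (respectively \(127\) or \(109\)). After discarding up to \(180\) bad points, Hasse--Weil works only for \(m\geq18\).
\item \(m=17\) needs a separate argument: choose \(r\) with \(\tr(a_r)=\tr(c_r)=1\), so that no bad fibre is rational.
\item The range \(5\leq m\leq16\) must be settled by a large exact computation.
\end{itemize}
Your threshold ``\(m\geq8\) or so'' is therefore far too optimistic. A Lang--Weil argument on the \((u,v)\)-surface would also require explicit constants that you have not supplied.
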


The lower bound in Theorem~\ref{thm:main} is the \(e=3\) instance of
\cite[Theorem~3]{YohananovSchwartz2025}.  Our contribution is the
matching upper bound.  The proof has two complementary parts: an
algebraic argument for every \(m\geq17\), and deterministic exact
verification of the finite range \(5\leq m\leq16\).

The proof extends the intersecting-column argument used for the
double-error-correcting family~\cite{YohananovSchwartz2025}.  There,
each syndrome is a sum of at most three columns, and one common column
reduces the union of the two representing sets from six columns to
five.  Here each of the two syndromes may require five columns, so the
exact bound eight requires two common columns.  This simple intersection
count is the central idea of the proof.

For \(m\geq8\), a structural reduction based on the three nonzero
syndromes in the two-dimensional space generated by a pair, together
with the coset description of Charpin, Helleseth, and Zinoviev
~\cite{CharpinHellesethZinoviev2006}, leaves only one possible
obstruction: all three nonzero syndromes require five columns and all
have first coordinate zero.  Prescribing two common columns in this
remaining case converts the completion of each five-column set into a
cubic splitting condition.  The definitions and the structural reduction
are given in Section~\ref{sec:preliminaries}; the passage from a prescribed
shared pair to the two completing cubics is carried out in
Section~\ref{sec:fl-stable-range}.

For the algebraic range, we combine the two cubic completion conditions
into a single point-counting problem.  The Hasse--Weil bound gives a
common solution for \(m\geq18\), and a separate trace choice completes
the boundary case \(m=17\).  For \(5\leq m\leq16\), deterministic exact
certificates complete the proof: the cases \(m=5,6,7\) are checked on
the whole syndrome space, \(m=8\) is checked on the reduced zero-first
slice, the cases \(9\leq m\leq12\) use scaling orbits, and the cases
\(13\leq m\leq16\) additionally quotient by Frobenius symmetry.

Section~\ref{sec:preliminaries} proves the structural reduction, and
Section~\ref{sec:fl-stable-range} proves the upper bound for \(m\geq17\).
Section~\ref{sec:finite-range-summary} states the certified finite-range
theorem, after which the main proof and outlook conclude the main text.
The local ramification calculations are deferred to
Appendix~\ref{sec:fl-ramification-details}, and the orbit criterion and
certificate details are collected in
Appendix~\ref{sec:ca-finite-certificates}.


\section{Preliminaries and the structural reduction}
\label{sec:preliminaries}

The purpose of this section is to remove all syndrome pairs that can
already be represented using a union of at most eight parity-check
columns.  After the basic definitions, we show that for \(m\geq8\)
only one special configuration requires further work.  The cases
\(m=5,6,7\) are instead verified directly on the whole syndrome space.

Put
\[
        q\eqdef2^m,\qquad
        F\eqdef\F_q,\qquad
        F^*\eqdef F\setminus\{0\},\qquad
        n\eqdef q-1.
\]
We retain \(C_m\) and the columns \(h(x)\) from the introduction and
index the coordinates of \(C_m\) by \(F^*\).  After expanding their field
entries in a fixed \(\F_2\)-basis, the columns \(h(x)\), \(x\in F^*\),
form a binary parity-check matrix \(H\) for \(C_m\).  We identify a
binary vector \(\bfv=(v_x)_{x\in F^*}\) with its support
\(E=\supp(\bfv)\), namely, the set of field indices of its selected
columns.  These field indices are also called \emph{locators}.  Via the
fixed \(\F_2\)-linear identification
\(\F_2^{3m}\cong F^3\), one has
\[
        H\bfv^\T=\sum_{x\in E}h(x),
        \qquad
        \wt(\bfv)=|E|.
\]
For \(\bfs^\T\in F^3\), define
\[
 \begin{aligned}
 \ell(\bfs^\T)
 &\eqdef
 \min\left\{
 \wt(\bfv):
 \bfv\in\F_2^n,\ H\bfv^\T=\bfs^\T
 \right\}\\
 &=
 \min\left\{
 |E|:
 E\subseteq F^*,\
 \bfs^\T=\sum_{x\in E}h(x)
 \right\}.
 \end{aligned}
\]
We call \(\ell(\bfs^\T)\) the coset weight of the syndrome.  A
\(\bfs^\T\)-leader is the support of a coset leader attaining this
minimum.  Since the ordinary covering radius is
five, a syndrome is called \emph{deep} if \(\ell(\bfs^\T)=5\), and
\emph{shallow} if \(\ell(\bfs^\T)\leq4\).
When a bold syndrome symbol occurs as a subscript, we suppress its
terminal transpose; for example, \(D_{\bfs}\) denotes a quantity
attached to the syndrome \(\bfs^\T\).
To measure the shared-column cost of a fixed pair of syndromes, define
\[
 d_2(\bfa^\T,\bfb^\T)\eqdef
 \min\left\{
     |E\cup G|:
     E,G\subseteq F^*,\
     \bfa^\T=\sum_{x\in E}h(x),\
     \bfb^\T=\sum_{x\in G}h(x)
 \right\}.
\]
Thus \(d_2(\bfa^\T,\bfb^\T)\) is the least number of parity-check
columns needed to represent these two particular syndromes
simultaneously; a column used in both representations is counted only
once.  The second generalized covering radius is the worst such cost:
\[
        R_2(C_m)=
        \max_{\bfa^\T,\bfb^\T\in F^3}
        d_2(\bfa^\T,\bfb^\T).
\]

For linearly independent syndromes \(\bfa^\T,\bfb^\T\), let
\[
        W\eqdef
        \{\bzero^\T,\bfa^\T,\bfb^\T,(\bfa+\bfb)^\T\},
\]
be the two-dimensional binary syndrome space that they span.  Its
\emph{coset-weight pattern} is the nondecreasing rearrangement of
\[
        \bigl(
        \ell(\bfa^\T),
        \ell(\bfb^\T),
        \ell((\bfa+\bfb)^\T)
        \bigr).
\]
We abbreviate the patterns \((4,5,5)\) and \((5,5,5)\) by \(455\) and
\(555\), respectively.  We call \(W\) \emph{all-zero-first} if
\[
        W\subseteq\{0\}\times F^2.
\]
An \emph{all-zero-first \(555\) plane} is an all-zero-first
two-dimensional syndrome space with coset-weight pattern \(555\).

The absolute trace used below is
\[
 \tr\eqdef\operatorname{Tr}_{F/\F_2},
 \qquad
 \tr(z)=\sum_{i=0}^{m-1}z^{2^i}.
\]

With the basic objects now defined, we can describe the proof before
entering the reductions.  Adding two binary column representations
shows why the third nonzero syndrome \((\bfa+\bfb)^\T\) enters the
argument.  Choosing leaders for the two smallest coset weights among
the three nonzero syndromes costs at most their sum.  This either gives
an eight-column union immediately or leaves only the patterns \(455\)
and \(555\).  The first reduction below,
using the deep-syndrome characterization recalled from Charpin,
Helleseth, and Zinoviev, eliminates the \(455\) pattern and every
\(555\) space with a nonzero first coordinate.

Second, for the remaining all-zero-first \(555\) space, it is
enough to find two five-element leaders whose intersection has size at
least two.  Dividing every column index by \(r\in F^*\) sends a syndrome
\[
 (\sigma_1,\sigma_3,\sigma_5)^\T
 \quad\text{to}\quad
 \left(\frac{\sigma_1}{r},
       \frac{\sigma_3}{r^3},
       \frac{\sigma_5}{r^5}\right)^\T,
\]
so in particular it preserves a zero first coordinate.  If two
proposed common indices are distinct elements \(r,t\in F^*\), division
by \(r\) sends them to \(\{1,t/r\}\).  Writing \(t/r=1+s\) gives the
normalized pair \(\{1,1+s\}\), where \(s\notin\{0,1\}\).  For a normalized
syndrome \(\bfs^\T=(0,\alpha,\beta)^\T\), the two fixed indices leave
three indices \(\gamma_1,\gamma_2,\gamma_3\) to be found, subject to
\[
 h(1)+h(1+s)+\sum_{i=1}^3h(\gamma_i)=\bfs^\T.
\]
After translating \(\gamma_i=\zeta_i+s\), Newton's identities convert
the three power-sum equations in this display into the completing cubic
of Lemma~\ref{lem:fl-completing-cubic}.  Translating its roots back by
\(\gamma_i=\zeta_i+s\) gives the three remaining column indices.  Thus
a single value of \(s\) for which both cubics split produces the two
desired leaders and hence an eight-column union.

Finally, for \(m\geq17\) we construct an algebraic curve whose
\(F\)-rational points encode values of \(s\) for which both cubics
split.  A point on this curve supplies both three-column completions
for the same \(s\).  Hasse--Weil guarantees such a point for
\(m\geq18\), and a separate trace argument treats \(m=17\).  For the
finite range, deterministic
verification checks the same shared-column conclusion directly, after
the reductions and available field symmetries.

\subsection[Reduction to the zero-first 555 obstruction]
{Reduction to the zero-first 555 obstruction}

For sets \(E,G\subseteq F^*\), define their symmetric difference by
\[
 E\mathbin{\triangle}G
 \eqdef
 (E\setminus G)\cup(G\setminus E),
\]
the set of indices belonging to exactly one of \(E\) and \(G\).  If
\(E\) and \(G\) represent \(\bfa^\T\) and \(\bfb^\T\), respectively,
then the three supports
\[
        E,\qquad G,\qquad E\mathbin{\triangle}G
\]
represent \(\bfa^\T,\bfb^\T,(\bfa+\bfb)^\T\), respectively.  When
\(\bfa^\T\) and \(\bfb^\T\) are linearly independent, these are the
three nonzero syndromes in
\[
        \{\bzero^\T,\bfa^\T,\bfb^\T,(\bfa+\bfb)^\T\}.
\]
Moreover,
\[
        2|E\cup G|
        =
        |E|+|G|+|E\mathbin{\triangle}G|.
\]
Changing the ordered basis of \(W\) by
\(\operatorname{GL}_2(\F_2)\) permutes
\(\bfa^\T,\bfb^\T,(\bfa+\bfb)^\T\) and preserves the support union.

We use the following published classification, rewritten in our
syndrome notation.

\begin{lemma}[Charpin--Helleseth--Zinoviev deep-syndrome classification]
\label{lem:ca-chz-classification}
Let \(m\geq8\), and write a syndrome with nonzero first coordinate as
\[
        \bfs^\T=(\sigma_1,\sigma_3,\sigma_5)^\T,
        \qquad \sigma_1\neq0.
\]
Put
\[
        \rho\eqdef\sigma_1,
        \qquad
        \eta\eqdef\frac{\sigma_3}{\rho^3}+1,
        \qquad
        M(\bfs^\T)
        \eqdef
        \sigma_3^2
        +\sigma_1^3\sigma_3
        +\sigma_1\sigma_5
        +\sigma_1^6.
\]
Then \(\bfs^\T\) is deep if and only if
\[
        \tr(\eta)=1
        \qquad\text{and}\qquad
        M(\bfs^\T)=0.
\]
Equivalently, the nonzero-first deep syndromes are precisely
\[
 \Phi(\rho,\eta)\eqdef
 \bigl(\rho,\ \rho^3(1+\eta),
 \rho^5(1+\eta+\eta^2)\bigr)^\T,
 \qquad
 \rho\in F^*,\quad \tr(\eta)=1.
\]
In particular, \(M(\bfs^\T)\neq0\) implies
\(\ell(\bfs^\T)\leq4\).
\end{lemma}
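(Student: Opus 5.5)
The proof is a translation of the Charpin--Helleseth--Zinoviev theorem~\cite{CharpinHellesethZinoviev2006} into the present coordinates, followed by an elementary check that the two descriptions in the statement agree. First I would normalize the first coordinate using the scaling symmetry described above. Dividing every locator by \(\rho=\sigma_1\in F^*\) maps the syndrome \((\sigma_1,\sigma_3,\sigma_5)^\T\) to
\[
 \bigl(1,\ \sigma_3/\rho^3,\ \sigma_5/\rho^5\bigr)^\T .
\]
This is a bijection of \(F^*\), so it maps \(E\mapsto E/\rho\), preserves \(|E|\), and hence preserves \(\ell\). The quantity \(\eta=\sigma_3/\rho^3+1\) is invariant under this normalization. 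The polynomial \(M\) is weighted-homogeneous of degree \(6\) for the weights \((1,3,5)\), so \(M(\bfs^\T)=\rho^6M(\text{normalized})\). Both conditions \(\tr(\eta)=1\) and \(M=0\) are therefore scaling-invariant, and it suffices to treat syndromes \((1,u,v)^\T\).

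Second, for syndromes \((1,u,v)^\T\) I would quote the CHZ description of cosets of weight five in \(\bch(3,m)\), valid in their range (which I expect to require \(m\geq8\), matching the hypothesis). I would then rewrite their conditions in the variables \(u,v\). The expected form is:
\begin{itemize}
\item \(v=u^2+u+1\), which is the vanishing of \(M(1,u,v)=u^2+u+v+1\);
\item a trace condition on \(u+1\), which is \(\tr(\eta)=1\).
\end{itemize}
I expect this bookkeeping to be the main obstacle. CHZ may use a different column normalization, for example \((x,x^3,x^5)\) versus translated syndromes like \(\sigma_3+\sigma_1^3\), or phrase deepness through the irreducibility or splitting of an associated error-locator polynomial. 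If their condition is of that splitting type, I would convert it using the standard criterion that \(z^2+z+c\) has a root in \(F\) if and only if \(\tr(c)=0\). I would also double-check the extreme case \(\eta=0\) separately, although there \(\tr(\eta)=0\), so it lies outside the deep set anyway.

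Third, I would verify the parametrized form directly. If \(\tr(\eta)=1\) and \(M=0\), then \(\sigma_3=\rho^3(1+\eta)\) by definition of \(\eta\). Solving \(M=0\) for \(\sigma_5\) gives
\[
\sigma_5=\frac{\sigma_3^2+\rho^3\sigma_3+\rho^6}{\rho}.
\]
Substituting and expanding in characteristic two,
\[
\sigma_5=\rho^5\bigl((1+\eta)^2+(1+\eta)+1\bigr)=\rho^5(1+\eta+\eta^2),
\]
so \(\bfs^\T=\Phi(\rho,\eta)\). Conversely, every \(\Phi(\rho,\eta)\) has first coordinate \(\rho\), the prescribed \(\eta\), and \(M=0\) by the same computation. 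The final clause then follows because the ordinary covering radius is five: a nonzero-first syndrome with \(M\neq0\) is not deep, hence has \(\ell\leq4\).
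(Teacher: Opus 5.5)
Your plan matches the paper's approach: the lemma is imported from Charpin--Helleseth--Zinoviev. Your scaling remark and your computation \(\sigma_5=\rho^5(1+\eta+\eta^2)\), which shows that the two descriptions agree, are both correct. The gap is in the step you call bookkeeping, namely the range of validity. The two-sided characterization in CHZ (their Theorems~2 and~3) gives the equivalence only for \(m\geq10\). For \(m=8,9\), the paper draws on CHZ (Theorem~2 with Remark~1) only for the fact that every \(\Phi(\rho,\eta)\) with \(\tr(\eta)=1\) is deep. The converse, that a nonzero-first deep syndrome must satisfy \(M=0\) and \(\tr(\eta)=1\), cannot simply be quoted for these \(m\). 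As written, your argument therefore proves the lemma only for \(m\geq10\). That includes the final clause ``\(M\neq0\) implies \(\ell\leq4\)'', which the later reductions rely on.

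The missing ingredient for \(m=8,9\) is a counting argument. The family \(\Phi(\rho,\eta)\) has exactly \(n(n+1)/2\) members, since \(\rho\) and \(\eta\) are recovered from \(\sigma_1\) and \(\sigma_3\). The zero-first deep syndromes number \((5n^2+13n)/6\). This is CHZ's Lemma~1, but you can also derive it directly. A nonzero zero-first syndrome has coset weight three or five. The weight-three ones correspond bijectively, by the BCH bound, to the \(n(n-1)/6\) three-element zero-sum subsets of \(F^*\), which leaves \(q^2-1-n(n-1)/6=(5n^2+13n)/6\). Now
\[
\frac{n(n+1)}{2}+\frac{5n^2+13n}{6}=\frac{4n(n+2)}{3},
\]
which is the total number of deep cosets given by CHZ's Theorem~1. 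Hence the deep family \(\Phi(\rho,\eta)\) already exhausts the nonzero-first deep syndromes. With this count added, your proof coincides with the paper's.
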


For \(m\geq10\), the equivalence in
Lemma~\ref{lem:ca-chz-classification} is precisely
Theorems~2 and~3 of Charpin, Helleseth, and
Zinoviev~\cite[Theorems~2 and~3]{CharpinHellesethZinoviev2006}.
For \(m=8,9\), it follows by combining
\cite[Theorem~1, Theorem~2, Remark~1, and
Lemma~1]{CharpinHellesethZinoviev2006}.  The nonzero-first family in
their Theorem~2 and Remark~1 has cardinality \(n(n+1)/2\), while their
Lemma~1 counts \((5n^2+13n)/6\) zero-first deep cosets, and
\[
        \frac{n(n+1)}2+\frac{5n^2+13n}{6}
        =
        \frac{4n(n+2)}3.
\]
The right-hand side is the full number of deep cosets in their
Theorem~1.  Thus no additional nonzero-first deep syndromes remain.

The parameters are unique: \(\rho\) is the first syndrome coordinate,
while \(\eta\) is determined by the third-power coordinate
\(\sigma_3\).
These parameters are unrelated to the common-pair parameter \(s\) used
later in \(\{1,1+s\}\).

We need the classification in order to force a prescribed column into
a five-element leader.  Let the nonzero-first deep syndrome
\(\bfs^\T\) have first coordinate \(\rho\), and choose
\(\beta\in F^*\setminus\{\rho\}\).  The residual syndrome
\(\bfs^\T+h(\beta)\) then has nonzero first coordinate.  If its value
of \(M\) is nonzero, Lemma~\ref{lem:ca-chz-classification} shows that
it has a leader using at most four columns; adjoining \(h(\beta)\)
represents \(\bfs^\T\) with at most five columns.  Thus it is enough to
prove that \(M(\bfs^\T+h(\beta))\neq0\).

We shall also use the following elementary zero-first fact twice.  A
nonzero syndrome whose first coordinate is zero cannot have coset
weight one, two, or four.  Weight one is impossible because every
column \(h(x)\) has first coordinate \(x\neq0\), and weight two would
force two distinct indices to have zero sum.  If four distinct indices
sum to zero, translating all four by one of them removes that index;
in characteristic two, the relations \(p_2=p_1^2=0\) and
\(p_4=p_1^4=0\) show directly that the first, third, and fifth power
sums are preserved.  Translation is a bijection, so the three remaining
indices are still distinct and nonzero.  This gives a three-column
representation.  Consequently, a nonzero zero-first syndrome of coset
weight at most four has coset weight three.

\begin{lemma}[Zero-first \(555\) reduction]
\label{lem:ca-zero-first-555}
Let \(m\geq8\).  If
\[
        d_2(\bfa^\T,\bfb^\T)>8,
\]
then
\[
        \ell(\bfa^\T)
        =\ell(\bfb^\T)
        =\ell((\bfa+\bfb)^\T)
        =5
\]
and
\[
        (\bfa^\T)_1
        =(\bfb^\T)_1
        =((\bfa+\bfb)^\T)_1
        =0.
\]
In other words, any hypothetical obstruction surviving the
noncomputational reductions would be a two-dimensional syndrome space
whose three nonzero elements are deep and have first coordinate zero.
\end{lemma}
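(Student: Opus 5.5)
We argue by contraposition. Suppose \(d_2(\bfa^\T,\bfb^\T)>8\). If \(\bfa^\T,\bfb^\T\) were linearly dependent, then one of them is zero or they are equal, and a single leader of size at most five covers both; hence they are independent and span a plane \(W\). For any two of the three nonzero syndromes of \(W\), leaders \(E,G\) give \(|E\cup G|\leq|E|+|G|\), and a \(\operatorname{GL}_2(\F_2)\) change of basis does not change the support union. So the two smallest entries of the coset-weight pattern sum to at least nine. Since all coset weights are at most five, the pattern is \(455\) or \(555\). It then remains to exclude \(455\) entirely, and to exclude \(555\) whenever some nonzero syndrome of \(W\) has nonzero first coordinate. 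Once this is done, the first coordinates of all three syndromes vanish, which is the claim.

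\emph{Pattern \(455\).} Write the weight-four syndrome as \(\bfa^\T\). By the zero-first fact, \(\bfa^\T\) has nonzero first coordinate. The first coordinates of \(\bfb^\T\) and \((\bfa+\bfb)^\T\) sum to that of \(\bfa^\T\), so at least one of them, say \(\bfb^\T\) after rebasing, has nonzero first coordinate \(\rho\). By Lemma~\ref{lem:ca-chz-classification}, \(\bfb^\T=\Phi(\rho,\eta)\) with \(\tr(\eta)=1\). Fix a leader \(E\) of \(\bfa^\T\) with \(|E|=4\). We want some \(\beta\in E\setminus\{\rho\}\) with \(M(\bfb^\T+h(\beta))\neq0\). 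Then \(\bfb^\T+h(\beta)\) has a representation by at most four columns, so \(\bfb^\T\) has a representation \(G\) with \(|G|\leq5\) and \(\beta\in G\). This gives \(|E\cup G|\leq8\). To find \(\beta\), I would scale so that \(\rho=1\), put \(u=1+\beta\), and expand \(M(\Phi(1,\eta)+h(1+u))\) as a polynomial in \(u\). I expect it to factor as a power of \(u\) times a factor whose vanishing forces a trace or quadratic condition incompatible with \(\tr(\eta)=1\), or at least with a root set too small to contain the three available elements of \(E\setminus\{\rho\}\).

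\emph{Pattern \(555\) with a nonzero first coordinate.} Here at least two of the three syndromes have nonzero first coordinate, since the three first coordinates sum to zero; call them \(\bfa^\T,\bfb^\T\). By the same \(M\)-computation, now as a condition on \(\beta\) over all of \(F^*\), I would choose \(\beta\notin\{\rho_{\bfa},\rho_{\bfb}\}\) with both \(M(\bfa^\T+h(\beta))\neq0\) and \(M(\bfb^\T+h(\beta))\neq0\). A counting argument suffices: each condition excludes at most a bounded number of \(\beta\), and \(m\geq8\). The residual syndromes \(\bfa'^\T=\bfa^\T+h(\beta)\) and \(\bfb'^\T=\bfb^\T+h(\beta)\) are then shallow. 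It then suffices to show \(d_2(\bfa'^\T,\bfb'^\T)\leq7\). This is automatic if one of them has weight at most three, so assume both have weight four. In that case I would force a second shared column \(\gamma\), taken from a four-leader of \(\bfb'^\T\), such that \(\bfa'^\T+h(\gamma)\) has weight at most three.

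\emph{Main obstacle.} The hard step is the weight-four subcase just described. Weight at most three for a nonzero-first syndrome is a cubic-splitting condition given by Newton's identities, namely that an explicit cubic has three distinct nonzero roots. I would try first to show that, as \(\gamma\) ranges over the four elements of the leader of \(\bfb'^\T\), this condition cannot fail for all of them. Failing that, I would retreat to re-choosing \(\beta\) among the many admissible values so that one residual already has weight at most three, using a point count on the associated cubic, which splits for roughly a sixth of the parameter values.
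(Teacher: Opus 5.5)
Your reduction to the patterns \(455\) and \(555\), and your treatment of \(455\), match the paper. After your normalization \(\rho=1\), the expansion you propose gives \(M(\Phi(1,\eta)+h(1+u))=\eta(u+1)(u^2+u+\eta)\). The factor is \(u+1\), not a power of \(u\), and the product is nonzero for every \(\beta\in F^*\setminus\{\rho\}\) because \(\tr(\eta)=1\). So any \(\beta\in E\setminus\{\rho\}\) works. Deepness of \(\bfb^\T\) ensures that \(\beta\) is not in the four-column representation of the residual.

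The genuine gap is the \(555\) case with a nonzero first coordinate. Your ``automatic'' subcase never occurs. For a deep \(\bfs^\T\), \(\ell(\bfs^\T+h(\beta))\leq3\) would give \(\ell(\bfs^\T)\leq4\), so both residuals have coset weight exactly four for every admissible \(\beta\). For the same reason, your fallback of re-choosing \(\beta\) so that a residual has weight at most three is impossible. What remains is a \(\gamma\) with \(\ell(\bfa^\T+h(\beta)+h(\gamma))\leq3\). That is exactly the assertion that \(\bfa^\T\) and \(\bfb^\T\) have five-element leaders sharing the pair \(\{\beta,\gamma\}\). This is the same shared-pair problem that the paper, for zero-first planes, settles only through Hasse--Weil on a genus-\(181\) cover when \(m\geq17\) and exact computation below that. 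You give no reason why one of four specific values of \(\gamma\) must work, so this branch is not proved.

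The missing idea is that no such plane exists: the classification lemma shows directly that the third syndrome is shallow.

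\emph{All three first coordinates nonzero.} Scale so that \(\bfa^\T=\Phi(1,\eta)\) and \(\bfb^\T=\Phi(\tau,\theta)\) with \(\tau\notin\{0,1\}\). Expanding gives \(M((\bfa+\bfb)^\T)=\tau^3(1+\tau)^2(\xi^2+\xi+\theta)\) with \(\xi=(\eta+\tau^2\theta)/(\tau(1+\tau))\in F\). This cannot vanish, since \(\tr(\xi^2+\xi)=0\neq1=\tr(\theta)\).

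\emph{Exactly one first coordinate zero.} The other two syndromes share the first coordinate \(\rho\), say \(\Phi(\rho,\eta)\) and \(\Phi(\rho,\theta)\). Their sum is \((0,\rho^3\varepsilon,\rho^5(\varepsilon+\varepsilon^2))^\T\) with \(\varepsilon=\eta+\theta\neq0\) and \(\tr(\varepsilon)=0\). It is therefore represented by the three distinct nonzero locators \(\rho,\rho\zeta,\rho(\zeta+1)\), where \(\zeta^2+\zeta=\varepsilon\).

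In either case the pattern is not \(555\). Hence every \(555\) plane is all-zero-first, and no covering construction is needed in this branch.
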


\begin{proof}
The inequality \(d_2(\bfa^\T,\bfb^\T)>8\) forces \(\bfa^\T\) and
\(\bfb^\T\) to be linearly independent, since a dependent pair is
covered by a single coset leader
of weight at most five.
Let the three nonzero coset weights be \(a\leq b\leq c\).  By choosing
leaders for the two syndromes having coset weights \(a\) and \(b\), and by the
\(\operatorname{GL}_2(\F_2)\)-invariance above, one has
\[
        d_2(\bfa^\T,\bfb^\T)\leq a+b.
\]
Since every coset weight is at most five, only the coset-weight patterns
\[
        (4,5,5)
        \quad\text{and}\quad
        (5,5,5)
\]
can remain.

We first show that, for a nonzero-first deep syndrome, only the column
indexed by its first coordinate can leave a deep residual.  If
\(\bfs^\T=\Phi(\rho,\eta)\) is deep and \(\beta\neq\rho\), put
\[
        \vartheta\eqdef1+\frac{\beta}{\rho}.
\]
A direct substitution in the polynomial \(M\) gives
\[
 \frac{M(\bfs^\T+h(\beta))}{\rho^6\eta}
 =
 (\vartheta+1)(\vartheta^2+\vartheta+\eta).
\]
The value \(\vartheta=1\) would give the excluded coordinate \(\beta=0\), while
\(\vartheta^2+\vartheta+\eta\) cannot vanish because
\(\tr(\eta)=1\).  Thus \(M(\bfs^\T+h(\beta))\neq0\), and
Lemma~\ref{lem:ca-chz-classification} shows that
\(\bfs^\T+h(\beta)\) is not deep, so
\(\ell(\bfs^\T+h(\beta))\leq4\).  On the other hand, if
\(\ell(\bfs^\T+h(\beta))\leq3\), adjoining the locator \(\beta\) would
represent the deep syndrome \(\bfs^\T\) with at most four columns.  Hence
\[
        \ell(\bfs^\T+h(\beta))=4
        \qquad(\beta\in F^*\setminus\{\rho\}),
\]
whereas, after scaling, the residual at \(\beta=\rho\) is
\[
        (0,\eta,\eta+\eta^2)^\T.
\]
By Newton's identities, a three-element representation would have
locator polynomial
\[
        x^3+(1+\eta)x+\eta
        =
        (x+1)(x^2+x+\eta),
\]
whose quadratic factor is irreducible because
\(\tr(\eta)=1\).  Since \(\eta\neq0\), the residual is nonzero.
The elementary zero-first fact above, together with covering radius
five, shows that the residual is deep.  Hence \(\rho\) is the only
column index whose removal can leave a deep residual.

Consider a \((4,5,5)\) pattern.  Write
\[
        \ell((\bfa+\bfb)^\T)=4,\qquad
        \ell(\bfa^\T)=\ell(\bfb^\T)=5.
\]
By the elementary zero-first fact above, the syndrome of coset weight
four has nonzero first coordinate.  Hence at least one of
\(\bfa^\T,\bfb^\T\), say \(\bfa^\T\), has nonzero first coordinate
\(\rho\).  Choose a four-element leader \(T\) for
\((\bfa+\bfb)^\T\), and choose
\[
        \beta\in T\setminus\{\rho\}.
\]
The conclusion just proved gives a four-element leader
\(E_{\bfa}\) for \(\bfa^\T+h(\beta)\).  Deepness of \(\bfa^\T\) forces
\(\beta\notin E_{\bfa}\).  Therefore
\[
        E\eqdef E_{\bfa}\cup\{\beta\}
\]
is a five-element leader for \(\bfa^\T\), and it meets \(T\).  The support
\[
        G\eqdef E\mathbin{\triangle}T
\]
represents \(\bfb^\T\), and
\[
        |E\cup G|
        =
        |E\cup T|
        \leq5+4-1=8.
\]
Thus the \((4,5,5)\) pattern is never an obstruction.

It remains to consider a \((5,5,5)\) plane.  The first-coordinate map
on its two-dimensional syndrome space has rank two, one, or zero;
accordingly, none, exactly one, or all three of the nonzero syndromes
have first coordinate zero.  Suppose first that
all three first coordinates are nonzero.  After scaling, write
\[
        \bfa^\T=\Phi(1,\eta),\qquad
        \bfb^\T=\Phi(\tau,\theta),
\]
where
\[
        \tau\notin\{0,1\},
        \qquad
        \tr(\eta)
        =
        \tr(\theta)
        =1.
\]
With
\[
        \xi\eqdef\frac{\eta+\tau^2\theta}{\tau(1+\tau)},
\]
direct expansion gives
\[
        M((\bfa+\bfb)^\T)
        =
        \tau^3(1+\tau)^2(\xi^2+\xi+\theta).
\]
Deepness of \((\bfa+\bfb)^\T\) would force
\(\theta=\xi^2+\xi\), contradicting
\(\tr(\theta)=1\).

Finally suppose that exactly one first coordinate is zero.  After
permuting the three nonzero syndromes, the other two have the form
\[
        \Phi(\rho,\eta),\qquad \Phi(\rho,\theta).
\]
Their sum is
\[
        \bigl(0,\rho^3\varepsilon,
        \rho^5(\varepsilon+\varepsilon^2)\bigr)^\T,
        \qquad \varepsilon\eqdef \eta+\theta,
        \qquad \tr(\varepsilon)=0.
\]
If \(\varepsilon=0\), the sum is zero, contrary to independence of the
two generators.  Otherwise choose \(\zeta\in F\) with
\[
        \zeta^2+\zeta=\varepsilon.
\]
The three distinct nonzero locators
\[
        \rho,\qquad \rho\zeta,\qquad \rho(\zeta+1)
\]
represent this zero-first syndrome, contradicting its deepness.  Hence
every \(555\) plane has all three first coordinates zero.
\end{proof}

\begin{remark}
Lemma~\ref{lem:ca-zero-first-555} is a conditional reduction; it does
not assert that a pair with \(d_2>8\) exists.  It says that any
hypothetical counterexample to the upper bound must generate an
all-zero-first \(555\) plane.  The remaining sections exclude this last
obstruction and hence show that the hypothesis \(d_2>8\) is never
realized.
\end{remark}


\section{An algebraic proof for
\texorpdfstring{\(m\geq17\)}{m >= 17}}
\label{sec:fl-stable-range}

We retain all notation from Section~\ref{sec:preliminaries}.
For later formulas we extend this notation by
\[
        h(0)\eqdef\bzero^\T.
\]
The computer-assisted part of the argument proves the exact finite
range \(5\leq m\leq16\); see
Theorem~\ref{thm:ca-certified-finite-range}.  The purpose of the
present section is to give an algebraic proof for \(m\geq17\).
Both parts use the same noncomputational reduction,
Lemma~\ref{lem:ca-zero-first-555}; its hypothesis \(m\geq8\) is
automatic here.  According to that lemma, the only possible
counterexample to the eight-column upper bound is a two-dimensional
syndrome space whose
three nonzero elements are deep and have first coordinate zero.

We also use the published lower bound
\begin{equation}
\label{eq:fl-published-lower-bound}
        R_2(C_m)\geq8,
\end{equation}
which is the \(t=2\), triple-error-correcting instance of
\cite[Theorem~3]{YohananovSchwartz2025}.  That theorem assumes
\[
        2^{\lceil m/2\rceil}\geq 2e-1.
\]
For \(e=3\) this holds for every \(m\geq5\).  The proof below supplies
the matching upper bound for \(m\geq17\).

\subsection{The fixed pair and its completing cubic}

The goal is to find, in every all-zero-first \(555\) plane, two
weight-five leaders that share a pair of locators.  We normalize the
pair and express each three-locator completion by one cubic.  We then
seek one value of the common parameter for which both cubics split.
The proof separates cases according to whether the third-power
coordinates of the two syndromes coincide, because the corresponding
point-counting bounds are different.

Let
\[
        W\eqdef
        \{\bzero^\T,\bfa^\T,\bfb^\T,(\bfa+\bfb)^\T\}
\]
be a two-dimensional syndrome space such that its three nonzero
elements are deep and have first coordinate zero.  We write
\[
        \bfa^\T\eqdef(0,a,b)^\T,
        \qquad
        \bfb^\T\eqdef(0,c,d)^\T.
\]
The map
\[
 \pi_3\colon\{0\}\times F^2\longrightarrow F,
 \qquad
 \pi_3\bigl((0,u,v)^\T\bigr)\eqdef u
\]
is the projection onto the third-power syndrome coordinate.  Its rank
on \(W\), always taken over \(\F_2\), records how many distinct
third-power coordinates occur.  This is the case distinction used in
the simultaneous point count below.

Choose a nonzero locator \(r\).  Dividing all locators by \(r\) sends
\[
 \nu_r\bigl((0,u,v)^\T\bigr)
 \eqdef
 \left(0,\frac{u}{r^3},\frac{v}{r^5}\right)^\T.
\]
In particular, put
\[
 \begin{aligned}
 \bfa_r^\T&\eqdef\nu_r(\bfa^\T)
      =\left(0,a_r,b_r\right)^\T
      =\left(0,\frac a{r^3},\frac b{r^5}\right)^\T,\\
 \bfb_r^\T&\eqdef\nu_r(\bfb^\T)
      =\left(0,c_r,d_r\right)^\T
      =\left(0,\frac c{r^3},\frac d{r^5}\right)^\T.
 \end{aligned}
\]
After this normalization we prescribe the common pair
\[
        \{1,1+s\}.
\]
It consists of two distinct nonzero locators precisely when
\[
        s\notin\{0,1\}.
\]
The next lemma derives the cubic whose translated roots complete this
pair to a leader of the normalized syndrome.

\begin{lemma}[From a common pair to a cubic]
\label{lem:fl-completing-cubic}
Let \(\bfs^\T\eqdef(0,\alpha,\beta)^\T\), and put
\[
        D_{\bfs}(s)\eqdef\alpha+s^2+s,\qquad
        N_{\bfs}(s)\eqdef\beta+s^4+s.
\]
Assume that \(D_{\bfs}(s)\neq0\).  For
\(\gamma_1,\gamma_2,\gamma_3\) in an algebraic closure of \(F\), the
column-sum equation
\[
 h(1)+h(1+s)+\sum_{i=1}^3h(\gamma_i)=\bfs^\T
\]
holds if and only if the translated elements
\(\zeta_i\eqdef\gamma_i+s\) are the three roots, counted with
multiplicity, of
\[
        f_{\bfs}(y)
        \eqdef
        y^3+\frac{N_{\bfs}(s)}{D_{\bfs}(s)}y+D_{\bfs}(s).
\]
Consequently, the prescribed pair has a completing triple in \(F\) if
and only if \(f_{\bfs}\) splits completely over \(F\).
\end{lemma}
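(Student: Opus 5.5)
The plan is to translate the column-sum equation coordinatewise into power-sum conditions on \(\gamma_1,\gamma_2,\gamma_3\), shift these to power sums of the \(\zeta_i\), and then apply Newton's identities in characteristic two to read off the elementary symmetric functions. Write \(p_k\) for the \(k\)-th power sum of \(\zeta_1,\zeta_2,\zeta_3\) and \(e_1,e_2,e_3\) for their elementary symmetric functions. The column-sum equation is equivalent to three conditions:
\[
 \sum\gamma_i=s,\qquad
 \sum\gamma_i^3=\alpha+1+(1+s)^3,\qquad
 \sum\gamma_i^5=\beta+1+(1+s)^5 .
\]
Here \(1+(1+s)=s\), \(1+(1+s)^3=s+s^2+s^3\) and \(1+(1+s)^5=s+s^4+s^5\).

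The first step is to compute the translated power sums. For the first power sum, \(p_1=\sum\gamma_i+3s=s+s=0\). Since \(p_1=0\), we also get \(\sum\gamma_i^2=s^2\) and \(\sum\gamma_i^4=s^4\). Expanding \((\gamma+s)^3=\gamma^3+s\gamma^2+s^2\gamma+s^3\) and \((\gamma+s)^5=\gamma^5+s\gamma^4+s^4\gamma+s^5\) and summing over the three elements gives
\[
 p_3=\sum\gamma_i^3+s^3=\alpha+s^2+s=D_{\bfs}(s),\qquad
 p_5=\sum\gamma_i^5+s^5=\beta+s^4+s=N_{\bfs}(s).
\]
Because translation by \(s\) is invertible, the column-sum equation is equivalent to the system
\[
 p_1=0,\qquad p_3=D_{\bfs}(s),\qquad p_5=N_{\bfs}(s).
\]

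The second step applies Newton's identities for three variables modulo \(2\). From \(p_1=e_1=0\) we get \(p_2=p_1^2=0\). The identity for \(p_3\) then gives \(p_3=3e_3=e_3\). The identity for \(p_5\) reduces to \(p_5=e_1p_4+e_2p_3+e_3p_2=e_2p_3\). Thus the system is equivalent to
\[
 e_1=0,\qquad e_3=D_{\bfs}(s),\qquad e_2D_{\bfs}(s)=N_{\bfs}(s).
\]
The hypothesis \(D_{\bfs}(s)\neq0\) is exactly what allows solving the last relation uniquely as \(e_2=N_{\bfs}(s)/D_{\bfs}(s)\). Since signs disappear in characteristic two, these are precisely the conditions that \(\prod(y+\zeta_i)=f_{\bfs}(y)\), which proves the forward direction. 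For the converse, I would take the three roots of \(f_{\bfs}\) with multiplicity. Their elementary symmetric functions are then \(0\), \(N_{\bfs}(s)/D_{\bfs}(s)\) and \(D_{\bfs}(s)\), and running the same identities backwards recovers the three power sums and hence the column-sum equation.

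Finally, the \(\gamma_i=\zeta_i+s\) lie in \(F\) exactly when the \(\zeta_i\) do, since \(s\in F\). So a completing triple in \(F\) exists if and only if \(f_{\bfs}\) splits completely over \(F\), in the same sense (with multiplicity) as the statement. The only delicate point is the Newton step: it must be carried out correctly modulo \(2\), in particular \(p_5=e_2p_3\) once \(e_1=p_2=0\). It must also be observed that the lost information about \(e_2\) is recoverable precisely because \(D_{\bfs}(s)\neq0\).
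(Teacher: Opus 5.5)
Your proof is correct and follows essentially the paper's route: rewrite the column-sum equation as three power-sum conditions, then use the characteristic-two Newton identities together with the translation by \(s\). The only difference is the order of steps. You translate first, so that \(e_1=0\) and the identities collapse to \(p_3=e_3\) and \(p_5=e_2p_3\); the paper instead applies Newton's identities with \(e_1=s\), obtaining \(se_2+e_3=D_{\bfs}(s)\) and \(N_{\bfs}(s)=D_{\bfs}(s)(e_2+s^2)\), and translates the cubic afterwards. Your version also spells out the converse direction, which the paper leaves implicit.
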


\begin{proof}
Let \(\gamma_1,\gamma_2,\gamma_3\) be the completing locators, put
\[
        p_j\eqdef\sum_{i=1}^3\gamma_i^j,
\]
and let \(e_1,e_2,e_3\) be their elementary symmetric functions.  The
value \(e_1=s\) is forced by the zero first coordinate, because
the prescribed pair also has sum \(s\).  Newton's identities in
characteristic two give
\[
        p_3=s^3+se_2+e_3
\]
and
\[
        p_5=s^5+e_2s^3+e_2^2s+e_2e_3+e_3s^2
\]
for the third and fifth power sums of the completing triple.  Adding
the contributions of \(1\) and \(1+s\), and then comparing with
\((0,\alpha,\beta)^\T\), gives
\[
        se_2+e_3=D_{\bfs}(s)
\]
and
\[
        N_{\bfs}(s)=D_{\bfs}(s)(e_2+s^2).
\]
When \(D_{\bfs}(s)\neq0\), translation of the triple polynomial
\[
        x^3+sx^2+e_2x+e_3
\]
by \(x=y+s\) therefore yields
\[
        y^3+(e_2+s^2)y+(se_2+e_3)
        =
        y^3+\frac{N_{\bfs}(s)}{D_{\bfs}(s)}y+D_{\bfs}(s).
\]
This proves the asserted equivalence and the translation rule for the
roots.
\end{proof}

Such a completing triple gives a valid five-column representation only
when its elements are distinct and nonzero and avoid the prescribed
pair.

Lemma~\ref{lem:fl-completing-cubic} is now applied twice, to the two
normalized syndromes \(\bfa_r^\T\) and \(\bfb_r^\T\).  For fixed \(r\)
and \(s\), it produces the two cubics
\[
        f_{\bfa_r}(y)
        \qquad\text{and}\qquad
        f_{\bfb_r}(y).
\]
If both cubics split for the same value of \(s\) and both completing
triples satisfy this validity condition, translating their roots back
as in Lemma~\ref{lem:fl-completing-cubic} gives two five-column
representations sharing the prescribed pair
\(\{1,1+s\}\).  The next steps first describe the splitting condition
for one cubic and then impose the two conditions simultaneously.

Before constructing the algebraic curve, we must exclude a degenerate
normalization in which every split cubic repeats one of the two fixed
columns.  For \(\bfs^\T\eqdef(0,\alpha,\beta)^\T\), define the quantity
\[
        \kappa_{\bfs}\eqdef\beta+\alpha^2+\alpha.
\]
The identity
\begin{equation}
\label{eq:fl-N-D-kappa}
        N_{\bfs}(s)
        =D_{\bfs}(s)^2+D_{\bfs}(s)+\kappa_{\bfs}
\end{equation}
shows that \(D_{\bfs}(s)\) and \(N_{\bfs}(s)\) are coprime exactly when
\(\kappa_{\bfs}\neq0\).  The next lemma shows that
\(\kappa_{\bfs}=0\) never yields a valid five-column representation in
this construction.

\begin{lemma}[Exclusion of a repeated fixed locator]
\label{lem:fl-kappa-zero}
Let \(\bfs_0^\T\eqdef(0,a,b)^\T\), let \(r\in F^*\), and write
\[
        \bfs_r^\T\eqdef\nu_r(\bfs_0^\T)
        =(0,\alpha,\beta)^\T.
\]
Suppose \(\kappa_{\bfs_r}=0\).  For every \(s\) with
\(D_{\bfs_r}(s)\neq0\), a split
completing cubic repeats the locator \(1+s\) from the prescribed pair
and hence does not produce a five-element support.  Consequently
\(\kappa_{\bfs_r}=0\) cannot yield a valid completion in the fixed-pair
construction, irrespective of the trace of \(\alpha\).

For the original syndrome \(\bfs_0^\T\), this condition is
\begin{equation}
\label{eq:fl-bad-r-kappa}
        ar^3+br+a^2=0.
\end{equation}
It excludes at most three nonzero values of \(r\).
\end{lemma}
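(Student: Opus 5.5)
The plan is to compute directly. The hypothesis \(\kappa_{\bfs_r}=0\) forces the completing cubic of Lemma~\ref{lem:fl-completing-cubic} to have the fixed root \(y=1\). That root translates back to the locator \(1+s\).

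\textbf{Step 1: the root \(y=1\).} Fix \(s\) with \(D\eqdef D_{\bfs_r}(s)\neq0\). By the identity \eqref{eq:fl-N-D-kappa}, \(\kappa_{\bfs_r}=0\) gives \(N_{\bfs_r}(s)=D^2+D\), so \(N_{\bfs_r}(s)/D=D+1\). The completing cubic therefore becomes
\[
 f_{\bfs_r}(y)=y^3+(D+1)y+D .
\]
Evaluating at \(y=1\) gives \(1+(D+1)+D=0\) in characteristic two, so \(y=1\) is always a root.

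\textbf{Step 2: translating back.} By the translation rule \(\gamma_i=\zeta_i+s\) of Lemma~\ref{lem:fl-completing-cubic}, the root \(\zeta=1\) gives the completing locator \(\gamma=1+s\). So whenever the cubic splits, or even when it merely has its roots in an extension, one of the three completing locators equals \(1+s\). In binary column sums the column \(h(1+s)\) then appears twice and cancels. The resulting support has at most three elements, not five, and is not the five-element leader with the prescribed pair that the construction requires. This argument uses nothing about \(\tr(\alpha)\), which gives the ``irrespective of the trace'' clause.

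\textbf{Step 3: the condition on \(r\).} Substitute \(\alpha=a/r^3\) and \(\beta=b/r^5\) into \(\kappa_{\bfs_r}=\beta+\alpha^2+\alpha\). Multiplying by \(r^6\neq0\) turns \(\kappa_{\bfs_r}=0\) into \(br+a^2+ar^3=0\), which is \eqref{eq:fl-bad-r-kappa}.
\begin{itemize}
\item If \(a\neq0\), this is a nonzero cubic polynomial in \(r\), so it has at most three roots in \(F^*\).
\item If \(a=0\), the condition reduces to \(br=0\). Since \(\bfs_0^\T\) is a nonzero syndrome, \(b\neq0\), so no \(r\in F^*\) satisfies it.
\end{itemize}
In either case at most three nonzero values of \(r\) are excluded.

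I expect no serious obstacle. The only care needed is to note the implicit assumption that \(\bfs_0^\T\neq\bzero^\T\), which holds because it is a deep syndrome of the plane, so that \eqref{eq:fl-bad-r-kappa} is not identically zero.
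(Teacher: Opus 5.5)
Your proposal is correct and follows the paper's proof essentially step for step. You use the identity \(N=D^2+D\) to force the root \(y=1\); the paper displays the factorization \((y+1)(y^2+y+D)\) where you evaluate at \(y=1\). That root translates back to the repeated locator \(1+s\), and multiplying \(\kappa_{\bfs_r}=0\) by \(r^6\) gives \(ar^3+br+a^2=0\), with the same case split on whether \(a\) is zero.
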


\begin{proof}
If \(\kappa_{\bfs_r}=0\), then
\[
        N_{\bfs_r}(s)
        =D_{\bfs_r}(s)^2+D_{\bfs_r}(s),
\]
and Lemma~\ref{lem:fl-completing-cubic} gives
\[
        f_{\bfs_r}(y)
        =
        y^3+\bigl(D_{\bfs_r}(s)+1\bigr)y+D_{\bfs_r}(s)
        =
        (y+1)\bigl(y^2+y+D_{\bfs_r}(s)\bigr).
\]
The root \(1\) translates back to \(1+s\), which is already a
member of the fixed pair.  If the quadratic factor splits, every
split specialization therefore repeats this fixed locator; if it does not
split, there is no completing triple over \(F\).  The values with
\(D_{\bfs_r}(s)=0\) lie outside the domain of
Lemma~\ref{lem:fl-completing-cubic}.  Thus the
exception is inadmissible in every trace case, not only when the
quadratic factor is forced to be irreducible.

After replacing \((a,b)\) by \((a/r^3,b/r^5)\), multiplying
\(\kappa_{\bfs_r}=0\) by \(r^6\) gives
\eqref{eq:fl-bad-r-kappa}.  If \(a\neq0\), this is a nonzero cubic.
If \(a=0\), then \(b\neq0\), because \(\bfs_0^\T\) is nonzero, and the equation
reduces to \(br=0\).  In either case there are at most three nonzero
solutions.
\end{proof}

For two syndromes, Lemma~\ref{lem:fl-kappa-zero} excludes at most six
values of \(r\).  We henceforth choose \(r\) outside this exceptional
set.  For one normalized syndrome we abbreviate
\[
        \bfs_r^\T\eqdef(0,\alpha,\beta)^\T,\qquad
        D\eqdef\alpha+s^2+s,\qquad
        N\eqdef\beta+s^4+s,\qquad
        \kappa\eqdef\beta+\alpha^2+\alpha\neq0.
\]

The remaining algebraic argument has four steps.  First, we determine
the degree and genus of the splitting cover for one completing cubic.
Second, we combine the two one-syndrome covers and compute the genus of
the simultaneous cover.  Third, when the two third-power coordinates
coincide, we choose \(r\) to avoid the exceptional normalizations.
Finally, Hasse--Weil supplies a rational point corresponding to a valid
shared pair.  Lemmas~\ref{lem:fl-quadratic-resolvent} and
\ref{lem:fl-cardano-kummer} in
Appendix~\ref{sec:fl-ramification-details} serve only the first step:
together they locate all ramification needed to compute the
one-syndrome genus.

\subsection{The splitting cover of one completing cubic}

Let \(\overline F\) be an algebraic closure of \(F\), and let \(s\) be
transcendental over \(\overline F\).  Put
\[
        \lambda\eqdef\frac ND,\qquad \mu\eqdef D,
\]
so that the generic completing cubic over \(\overline F(s)\) is
\(y^3+\lambda y+\mu\).  By its \emph{splitting cover} over \(F\) we mean the
smooth projective curve whose function field is the splitting field of
this cubic over \(F(s)\).  The cover is called \emph{regular} when \(F\)
is algebraically closed in that function field.

The technical ramification calculation is deferred to
Appendix~\ref{sec:fl-ramification-details}.  It uses the standard
function-field conventions and ramification formulas
from~\cite{Stichtenoth2009}.

\begin{proposition}[The one-syndrome splitting cover]
\label{prop:fl-one-cover}
Let \(\bfs_r^\T=(0,\alpha,\beta)^\T\) satisfy
\(\kappa_{\bfs_r}\neq0\), and let \(f_{\bfs_r}\in F(s)[y]\) be the completing
cubic from Lemma~\ref{lem:fl-completing-cubic}.  The cubic is
geometrically irreducible, its geometric and arithmetic Galois groups
are \(S_3\), and its smooth projective
splitting cover \(\mathcal X_{\bfs_r}\) is regular of degree six and genus
thirteen.
Its only nontrivial inertia groups are \(C_2\) above the two zeros of
\(D\); the corresponding reduced pole order is five and the local
different exponent is six.
\end{proposition}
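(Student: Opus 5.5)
The plan is to pass through the quadratic resolvent and then a cubic Kummer layer, as Appendix~\ref{sec:fl-ramification-details} suggests, and to confirm the genus with Riemann--Hurwitz over the \(s\)-line. Write the cubic as \(y^3+\lambda y+\mu\) with \(\lambda=N/D\) and \(\mu=D\).

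\textbf{Step 1: irreducibility and the shape of the Galois group.} First I prove that the cubic has no root in \(\overline F(s)\). Clearing denominators gives \(Dy^3+Ny+D^2\in\overline F[s][y]\). By Gauss's lemma, a root \(y=P/Q\) in lowest terms satisfies \(Q\mid D\) and \(P\mid D^2\), so only finitely many candidates occur, and they are of bounded degree.
\begin{itemize}
\item Since \(D=\alpha+s^2+s\) has derivative \(1\), it is separable, and its two zeros \(s_1,s_2\) are distinct.
\item By \eqref{eq:fl-N-D-kappa} and \(\kappa\neq0\), \(N\) is a unit at each \(s_i\).
\item Comparing valuations at \(s_i\) (the Newton polygon there has slopes giving one root of valuation \(2\) and two of valuation \(-1/2\)) and at \(\infty\) should rule out every candidate.
\end{itemize}
Hence the cubic is geometrically irreducible, and the geometric Galois group is \(C_3\) or \(S_3\).

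\textbf{Step 2: the quadratic resolvent.} In characteristic two the resolvent of \(y^3+py+q\) is the Artin--Schreier extension
\[
z^2+z=\frac{p^3}{q^2}+c
\]
with \(c\) a constant, which here is \(z^2+z=N^3/D^5+c\). Its behaviour at the relevant points is as follows.
\begin{itemize}
\item At each \(s_i\) the right-hand side has a pole of order \(5\). This order is odd, so it is already reduced and the pole is totally ramified.
\item At \(\infty\) the pole has even order \(2\). An Artin--Schreier substitution \(z\mapsto z+(\text{polynomial in } s)\) must be checked to remove it entirely. This is exactly the content of Lemma~\ref{lem:fl-quadratic-resolvent}, which I would invoke.
\end{itemize}
Thus the resolvent is a nonconstant geometric extension \(K/\overline F(s)\), and the group is \(S_3\), not \(C_3\). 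Each ramified place has local different exponent \(5+1=6\). Riemann--Hurwitz gives \(2g_K-2=2(-2)+6+6\), so \(g_K=5\). The only inertia in this layer is \(C_2\) above \(s_1\) and \(s_2\).

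\textbf{Step 3: the cubic layer and the genus.} Over \(K(\omega)\) the splitting field is a Cardano--Kummer extension of degree \(3\); the characteristic is \(2\neq3\). I would show, via Lemma~\ref{lem:fl-cardano-kummer}, that the Kummer radicand has valuation divisible by \(3\) at every place, after multiplying by a suitable cube. The points to check are the places above \(s_1\), \(s_2\) and \(\infty\). Hence the cubic layer is unramified, and \(2g-2=3(2\cdot5-2)=24\), giving \(g=13\). As a consistency check over the \(s\)-line, the ramification consists of three places of index \(2\) above each \(s_i\), each with different exponent \(6\). Riemann--Hurwitz then gives \(2g-2=6(-2)+2\cdot3\cdot6=24\), which agrees.

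\textbf{Step 4: regularity.} The arithmetic Galois group contains the geometric group \(S_3\) and embeds in \(S_3\), so the two groups coincide. Therefore \(F\) is algebraically closed in the splitting field, and the cover is regular of degree six.

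\textbf{Main obstacle.} I expect the hardest part to be the local analysis at \(\infty\), together with the proof that the Kummer layer is unramified. Both require explicit expansion of \(N^3/D^5\) and of the Cardano radicand in the local parameter \(1/s\). If the reduction at \(\infty\) left an odd pole, the genus would change, so this expansion must be done carefully. I would carry it out first, using the identity \eqref{eq:fl-N-D-kappa} to write \(N^3/D^5\) in terms of \(D\) and \(\kappa\) only.
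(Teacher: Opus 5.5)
Your proposal is correct and follows essentially the paper's proof: rational-root and Newton-polygon irreducibility, the Artin--Schreier resolvent \(1+N^3/D^5\) ramified only at the two zeros of \(D\) with reduced pole order five, an unramified Cardano--Kummer cubic layer, Riemann--Hurwitz, and regularity because the geometric degree six is already maximal. The differences are cosmetic: you pass through the genus-five resolvent field, and you exclude the candidate root \(cD^2\) by its valuation at infinity rather than by a degree count. The one slip is that the Kummer step must also check the zeros of \(N\); there the Cardano roots have valuations \(0\) and \(3\), so nothing changes, and Lemma~\ref{lem:fl-cardano-kummer} already handles them.
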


The ramification lemmas and the proof of
Proposition~\ref{prop:fl-one-cover} are given in
Appendix~\ref{sec:fl-ramification-details}.

\subsection{Simultaneous splitting covers}

For the normalized syndromes \(\bfa_r^\T\) and \(\bfb_r^\T\) defined
above, let \(\mathcal X_{\bfa,\bfb,r}\) be the smooth projective cover obtained by composing
their two splitting fields.  The fiber above a base value \(s_0\) is
the set of points of the cover mapping to \(s_0\).  A rational point
away from the excluded fibers gives one value of \(s\) for which both
completing cubics split.
More precisely, call an \(F\)-rational point of
\(\mathcal X_{\bfa,\bfb,r}\) good if
its image \(s\) is finite, lies outside \(\{0,1\}\), and is outside the
branch locus, namely, the set of base values above which the cover
ramifies.  The decomposition group of a point is the subgroup of the
Galois group that fixes it.  Above an unramified \(F\)-rational base
value, an \(F\)-rational point on this regular Galois cover has trivial
decomposition group.  Both specialized cubics therefore split
completely over \(F\).
We now compute the genus according to the rank of \(\pi_3(W)\).

\subsubsection{Distinct third-power coordinates}

\begin{proposition}[Distinct third-power coordinates]
\label{prop:fl-distinct-a-genus}
Suppose \(a_r\neq c_r\) and
\(\kappa_{\bfa_r}\kappa_{\bfb_r}\neq0\).  Then
\(\mathcal X_{\bfa,\bfb,r}\) is a regular Galois cover of degree \(36\), with group
\[
        S_3\times S_3
\]
and genus
\[
        g(\mathcal X_{\bfa,\bfb,r})=181.
\]
\end{proposition}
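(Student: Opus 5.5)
The plan is to build \(\mathcal X_{\bfa,\bfb,r}\) as the compositum of the two one-syndrome splitting covers \(\mathcal X_{\bfa_r}\) and \(\mathcal X_{\bfb_r}\) from Proposition~\ref{prop:fl-one-cover}. The argument has three steps: show the two covers are linearly disjoint (even geometrically), deduce the group and regularity, and then read off the genus from Riemann--Hurwitz using the local data recorded in Proposition~\ref{prop:fl-one-cover}.

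\emph{Branch loci.} By Proposition~\ref{prop:fl-one-cover}, the cover \(\mathcal X_{\bfa_r}\) ramifies only above the zeros of \(D_{\bfa_r}(s)=a_r+s^2+s\). There are exactly two such zeros, and they are distinct because the derivative of \(s^2+s+a_r\) is \(1\). The same holds for \(\mathcal X_{\bfb_r}\), with the zeros of \(c_r+s^2+s\). Since \(a_r\neq c_r\), a common zero would force \(a_r=s^2+s=c_r\). Hence the two branch loci are disjoint, and together they give four distinct branch points.

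\emph{Linear disjointness.} Let \(L\) be the intersection of the two splitting fields over \(\overline F(s)\). It is a Galois subextension of an \(S_3\)-extension, so its group is a quotient of \(S_3\): either trivial, \(C_2\) (the quadratic resolvent field), or all of \(S_3\). Any place of \(\overline F(s)\) that ramifies in \(L\) must ramify in both covers. The branch loci are disjoint, so \(L/\overline F(s)\) is unramified everywhere. Since the projective line has no nontrivial unramified covers, \(L=\overline F(s)\). Therefore the geometric Galois group of the compositum is \(S_3\times S_3\), of order \(36\). Because the geometric and arithmetic groups of each factor coincide (Proposition~\ref{prop:fl-one-cover}), the arithmetic group is also \(S_3\times S_3\). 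Consequently no constant field extension occurs, and the cover is regular of degree \(36\).

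\emph{Genus.} Above a zero of \(D_{\bfa_r}\), the \(\bfb\)-cover is unramified. Locally, the compositum is therefore an unramified extension of \(\mathcal X_{\bfa_r}\). This has two consequences:
\begin{itemize}
\item the inertia group stays \(C_2\), so there are \(36/2=18\) places above each such point;
\item the local different exponent stays \(6\), since different exponents are unchanged under unramified base extension (transitivity of the different).
\end{itemize}
The same holds symmetrically above the zeros of \(D_{\bfb_r}\). Riemann--Hurwitz then gives
\[
 2g-2=36\cdot(-2)+4\cdot18\cdot6=360,
\]
so \(g=181\).

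As a sanity check, the same formula for one cover gives \(6(-2)+2\cdot3\cdot6=24\), hence \(g=13\), matching Proposition~\ref{prop:fl-one-cover}.

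\emph{Main obstacle.} I expect the most delicate point to be justifying that the wild \(C_2\) different exponent \(6\) carries over unchanged to the compositum. I would handle it via transitivity of the different in the tower \(\overline F(s)\subset\overline F(\mathcal X_{\bfa_r})\subset\overline F(\mathcal X_{\bfa,\bfb,r})\): the upper step is unramified above the \(\bfa\)-branch points, so it contributes nothing to the different there.
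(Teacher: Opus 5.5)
Your proposal is correct and follows the paper's proof essentially step for step: disjoint branch loci \(s^2+s+a_r=0\) and \(s^2+s+c_r=0\), linear disjointness because a nontrivial common quotient would be an unramified cover of \(\mathbb P^1\), regularity because the geometric degree \(36\) already equals the maximal arithmetic degree, and Riemann--Hurwitz over four branch points, each with \(18\) places of different exponent \(6\), giving genus \(181\). Your transitivity-of-the-different argument, showing that the exponent \(6\) survives in the compositum, justifies a step the paper only asserts.
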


\begin{proof}
By Proposition~\ref{prop:fl-one-cover}, the two factors are regular
\(S_3\)-covers, and their branch loci are
\[
        s^2+s+a_r=0
        \quad\text{and}\quad
        s^2+s+c_r=0.
\]
They are disjoint because their difference is the nonzero constant
\(a_r+c_r\).  A nontrivial intersection of the two Galois splitting
fields would give a nontrivial common quotient of two copies of
\(S_3\).  Such a quotient is either quadratic or \(S_3\), and its
ramification locus would have to be contained in both of the disjoint
branch loci.  It would therefore define a nontrivial everywhere
unramified geometric cover of \(\mathbb P^1\).  Over an algebraically
closed field, \(\mathbb P^1\) admits no nontrivial connected finite
\'{e}tale cover, so this is impossible.
The two splitting fields are consequently linearly disjoint, and the
composite has degree \(36\) and group \(S_3\times S_3\).  The same
geometric degree \(36\) is already the largest possible arithmetic
degree inside \(S_3\times S_3\).  The arithmetic group is therefore
the same product, and no constant-field extension occurs.

There are four geometric branch points.  At each, the inertia group
has order two and the local different exponent is six.  Thus there
are \(36/2=18\) points above each branch point, and
Riemann--Hurwitz gives
\[
\begin{aligned}
        2g(\mathcal X_{\bfa,\bfb,r})-2
        &=-2\cdot36+4\left(\frac{36}{2}\right)6\\
        &=-72+432=360.
\end{aligned}
\]
Therefore \(g(\mathcal X_{\bfa,\bfb,r})=181\).
\end{proof}

\subsubsection{Equal third-power coordinates}

Suppose now that \(a_r=c_r\).  The two completing cubics then have the
common denominator
\[
        D\eqdef s^2+s+a_r.
\]
The quadratic-resolvent calculation is given in
Appendix~\ref{sec:fl-ramification-details}.  There we add Artin--Schreier
coboundaries to choose representatives with minimal odd pole orders;
these are the reduced quadratic characters used below.  Put
\[
        K\eqdef\kappa_{\bfa_r},\qquad
        e_r\eqdef b_r+d_r,\qquad
        \kappa_{\bfb_r}=K+e_r.
\]
Since \(\bfa_r^\T\neq\bfb_r^\T\), one has \(e_r\neq0\).  At either common branch
point, the coefficient of \(D^{-5}\) in the sum of the two reduced
quadratic characters is
\[
        L
        \eqdef
        K^3+(K+e_r)^3
        =
        e_r(K^2+e_rK+e_r^2).
\]

\begin{proposition}[Equal third-power coordinates]
\label{prop:fl-equal-a-genus}
Assume \(K(K+e_r)\neq0\).

\begin{enumerate}
\item If \(L\neq0\), then \(\mathcal X_{\bfa,\bfb,r}\) is regular of degree \(36\).
      At each of its two branch points, the three nontrivial
      quadratic characters have reduced pole orders \(5,5,5\), and
      \[
              g(\mathcal X_{\bfa,\bfb,r})=127.
      \]
\item Suppose \(a_r=0\), \(L=0\), and \(e_r\neq1\).  Then
      \(\mathcal X_{\bfa,\bfb,r}\) is regular of degree \(36\), the three
      reduced pole orders are \(5,5,3\), and
      \[
              g(\mathcal X_{\bfa,\bfb,r})=109.
      \]
\end{enumerate}
\end{proposition}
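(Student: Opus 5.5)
The plan is to treat the compositum as a fibre product of two regular \(S_3\)-covers branched over the same two points \(s^2+s+a_r=0\), and to compute its genus from the conductors of its abelian subquotients. First we establish the degree. By Proposition~\ref{prop:fl-one-cover} each factor is a regular \(S_3\)-cover of genus \(13\). The geometric intersection of the two splitting fields corresponds to a common quotient of \(S_3\), so it is trivial, the common quadratic subfield, or everything. If the two quadratic resolvent fields are distinct, the intersection cannot be quadratic, and it cannot be \(S_3\) either, since that would contain the common quadratic subfield. So the key test is whether the sum of the two reduced Artin--Schreier characters \(\chi_{\bfa}+\chi_{\bfb}\) is a coboundary. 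Using the reduced representatives from Appendix~\ref{sec:fl-ramification-details}, that sum has a \(D^{-5}\) coefficient \(L\) at each branch point. If \(L\neq0\), the pole order is \(5\), which is odd and hence nonremovable, so the sum is not a coboundary. In case (2) we instead show the next coefficient, at \(D^{-3}\), is nonzero; we expect this to reduce, with \(a_r=0\) and \(K^2+e_rK+e_r^2=0\), to a factor like \(e_r(e_r+1)\), which is where \(e_r\neq1\) enters. Then the quadratic subfields are distinct, the geometric group is \(S_3\times S_3\) of order \(36\), and regularity follows exactly as in Proposition~\ref{prop:fl-distinct-a-genus}, since \(36\) is the maximal arithmetic degree.

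Next we compute the genus. Rather than handling the non-abelian wild ramification directly, I would use the Kani--Rosen style decomposition of the Jacobian, or equivalently Riemann--Hurwitz through the upper-numbering filtration. Over a branch point \(P\), the inertia group is an elementary abelian \(2\)-group \(I\subseteq C_2\times C_2\): the product of the two \(C_2\)-inertias, which are distinct because the quadratic characters are independent. It is therefore of order \(4\), and the ramification index is \(4\). For an elementary abelian \(2\)-group inertia with upper jumps given by reduced pole orders, the different exponent follows from the Hasse--Arf and conductor-discriminant formula. The discriminant contribution is \(\sum_{\chi\neq1}(\text{pole order}_\chi+1)\) over the three nontrivial characters of \(I\), namely \(\chi_{\bfa}\), \(\chi_{\bfb}\) and \(\chi_{\bfa}+\chi_{\bfb}\). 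This gives \(6+6+6=18\) in case (1) and \(6+6+4=16\) in case (2). The cubic (\(C_3\)) parts are tame and unramified at \(P\), because the inertia is a \(2\)-group, so the Kummer extensions of Lemma~\ref{lem:fl-cardano-kummer} are unramified there.

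Concretely, the degree of the different of the \(V_4\)-extension above \(P\), measured on the base, is \(18\) or \(16\). The full cover has degree \(36\) over \(V_4\)-subcover pieces, and the rest is unramified, so the contribution per branch point is \((36/4)\cdot18=162\) or \((36/4)\cdot16=144\). Riemann--Hurwitz then gives \(2g-2=-72+2\cdot162=252\), so \(g=127\), and \(2g-2=-72+2\cdot144=216\), so \(g=109\). Both values match the statement. As a sanity check, the same formula with a single \(C_2\) per point reproduces the \(2\cdot18\cdot6\) contributions of Proposition~\ref{prop:fl-distinct-a-genus}.

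The main obstacle is the local character computation. In case (1) we must confirm that the reduced sum has leading coefficient exactly \(L\) at order \(5\). In case (2) we must expand the sum at each root of \(s^2+s\) (with \(a_r=0\), the branch points are \(s=0,1\)), subtract Artin--Schreier coboundaries to kill even-order terms, and show that the \(D^{-3}\) coefficient is nonzero under \(L=0\), \(e_r\neq0,1\). I would do this explicitly from the resolvent expansion in Appendix~\ref{sec:fl-ramification-details}, using \(N=D^2+D+\kappa\) from \eqref{eq:fl-N-D-kappa}, so that each character is, up to coboundary, a rational function of \(D\) with coefficients in \(\kappa\).
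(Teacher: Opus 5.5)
Your proposal is correct and follows essentially the same route as the paper: distinct quadratic subfields (shown by the odd pole order of the sum character) give linear disjointness and regularity; the local inertia at each branch point is $V_4$; the conductor--discriminant formula gives local different exponent $18$ or $16$; and Riemann--Hurwitz with $9$ points over each of the two branch points yields genus $127$ and $109$. The computation you left open works out exactly as you anticipated: summing the reduced representatives in \eqref{eq:fl-exact-AS-reduction} for $\kappa=K$ and $\kappa=K+e_r$ gives $LD^{-5}+(e_r+e_r^2)D^{-3}+(e_r+\sqrt{e_r})D^{-1}$, so the $D^{-3}$ coefficient is $e_r(e_r+1)$. The paper reaches the same polar part by expanding $(e_rW_0^2+e_r^2W_0)/D^5$ and removing the coboundaries of $e_r/D^2$ and $\sqrt{e_r}/D$.
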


\begin{proof}
Proposition~\ref{prop:fl-one-cover} supplies two regular
\(S_3\)-covers with the same two branch points.  If \(L\neq0\), the
two individual quadratic characters and their sum
all have reduced pole order five.  The quadratic subfields are
distinct, and hence the local inertia in the degree-\(36\) composite
is the Klein four group \(V_4\cong C_2\times C_2\).  By the
conductor--discriminant formula, its local
different exponent is the sum of the Artin conductors of the three
nontrivial characters:
\[
        (5+1)+(5+1)+(5+1)=18.
\]
There are \(36/4=9\) points above each of the two branch points.
Therefore
\[
\begin{aligned}
        2g(\mathcal X_{\bfa,\bfb,r})-2
        &=-72+2\left(\frac{36}{4}\right)18\\
        &=-72+324=252,
\end{aligned}
\]
which gives \(g(\mathcal X_{\bfa,\bfb,r})=127\).  The nontrivial sum character also
shows that the two quadratic subfields are different.  Any nontrivial
Galois intersection of two \(S_3\)-extensions is either their
quadratic subfield or the full \(S_3\)-extension; the latter would
also identify the quadratic subfields.  The splitting fields are
therefore linearly disjoint.  Their geometric degree is \(36\), the
largest possible arithmetic degree, so arithmetic and geometric
regularity follow.

It remains to examine the second case, where \(a_r=0\) and the leading
\(D^{-5}\) terms cancel at both common branch points.  Here
\[
        D=s^2+s,\qquad W_0\eqdef s^4+s=D^2+D.
\]
When \(L=0\), direct subtraction, which is addition in characteristic
two, gives the sum of the two quadratic-resolvent classes as
\[
        \frac{e_rW_0^2+e_r^2W_0}{D^5}
        =
        e_r^2D^{-4}
        +(e_r+e_r^2)D^{-3}
        +e_rD^{-1}.
\]
Adding the Artin--Schreier coboundaries of
\[
        \frac{e_r}{D^2}
        \qquad\text{and}\qquad
        \frac{\sqrt{e_r}}{D}
\]
leaves the reduced polar part
\[
        (e_r+e_r^2)D^{-3}
        +(e_r+\sqrt{e_r})D^{-1}.
\]
If \(e_r\neq1\), the coefficient of \(D^{-3}\) is nonzero, so the sum
character has exact reduced pole order three.  The local different
exponent is now
\[
        (5+1)+(5+1)+(3+1)=16.
\]
Again the sum character is nontrivial, so the local inertia is \(V_4\)
and the global splitting fields are linearly disjoint.  Hence
\[
\begin{aligned}
        2g(\mathcal X_{\bfa,\bfb,r})-2
        &=-72+2\left(\frac{36}{4}\right)16\\
        &=-72+288=216,
\end{aligned}
\]
and \(g(\mathcal X_{\bfa,\bfb,r})=109\).
\end{proof}

The exceptional equations in
Proposition~\ref{prop:fl-equal-a-genus} can be avoided uniformly.

\begin{lemma}[Choice of locator for projection rank at most one]
\label{lem:fl-low-rank-locator}
Let \(W\) be an all-zero-first \(555\) plane, and suppose
\(\dim_{\F_2}\pi_3(W)\leq1\).  If \(m\geq16\), there is
a choice of basis \(\bfa^\T,\bfb^\T\) of \(W\) and a nonzero locator \(r\) for
which the simultaneous cover is regular and has genus at most \(127\).
\end{lemma}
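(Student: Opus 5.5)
The plan is to choose the basis of \(W\) so that the two generators have the same third-power coordinate. Then \(a_r=c_r\) holds automatically for every \(r\), and Proposition~\ref{prop:fl-equal-a-genus} applies. It remains to choose \(r\) avoiding a bounded number of exceptional values.

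\textbf{Step 1: choice of basis.} If \(\dim\pi_3(W)=0\), every nonzero element of \(W\) has third-power coordinate \(0\), and any basis works; then \(a=c=0\). If \(\dim\pi_3(W)=1\), then \(\pi_3\) restricted to \(W\) has a one-dimensional kernel. Exactly one nonzero syndrome has \(\pi_3=0\), and the other two share a common value \(u\neq0\). We take \(\bfa^\T,\bfb^\T\) to be these two, so \(a=c=u\neq0\). In both cases \(a_r=c_r=a/r^3\) for every \(r\in F^*\). Also \(e_r=(b+d)/r^5\neq0\), since \(\bfa^\T\neq\bfb^\T\) and \(a=c\).

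\textbf{Step 2: the condition \(K(K+e_r)\neq0\).} Since \(K+e_r=\kappa_{\bfb_r}\), this is exactly the condition that \(r\) avoids the exceptional sets of Lemma~\ref{lem:fl-kappa-zero} for \(\bfa^\T\) and for \(\bfb^\T\). That lemma removes at most six nonzero values of \(r\).

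\textbf{Step 3a: the case \(a\neq0\) (projection rank one).} Here we aim for case~(1) of Proposition~\ref{prop:fl-equal-a-genus}, i.e.\ \(L\neq0\). Since \(e_r\neq0\), the condition \(L=0\) means \((K/e_r)^2+(K/e_r)+1=0\).
\begin{itemize}
\item If \(m\) is odd, \(x^2+x+1\) has no root in \(F\), so \(L\neq0\) for every \(r\).
\item If \(m\) is even, \(L=0\) means \(K=\omega e_r\) for one of the two primitive cube roots of unity \(\omega\in F\). Multiplying by \(r^6\) gives
\[
        ar^3+\bigl(b+\omega(b+d)\bigr)r+a^2=0 .
\]
This is a nonzero cubic because \(a\neq0\), so it has at most three roots for each \(\omega\). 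At most six values of \(r\) are excluded.
\end{itemize}
Together with Step~2, at most twelve values of \(r\) are forbidden. For \(m\geq16\) this is far less than \(2^m-1\), so a good \(r\) exists. Case~(1) then gives a regular cover of genus \(127\).

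\textbf{Step 3b: the case \(a=0\) (projection rank zero).} Here \(a_r=0\) for all \(r\). Also \(b\neq0\) and \(d\neq0\), since the syndromes are nonzero. Moreover \(K=b/r^5\) and \(e_r=(b+d)/r^5\), so the ratio \(K/e_r=b/(b+d)\) does not depend on \(r\).
\begin{itemize}
\item If this ratio is not a root of \(x^2+x+1\), then \(L\neq0\) for every \(r\), and case~(1) applies to any \(r\).
\item If it is a root, then \(L=0\) for every \(r\), and we use case~(2) instead. That case needs \(a_r=0\), which holds, and \(e_r\neq1\). The latter means \(r^5\neq b+d\), which excludes at most \(\gcd(5,2^m-1)\leq5\) values of \(r\). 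Case~(2) gives genus \(109\leq127\).
\end{itemize}
The nonvanishing \(K(K+e_r)\neq0\) is automatic here, because \(b\) and \(d\) are nonzero. Again only a handful of \(r\) are excluded, so a suitable \(r\) exists for \(m\geq16\).

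\textbf{Main obstacle.} The substantive work has already been done in Proposition~\ref{prop:fl-equal-a-genus}. The delicate point here is the rank-zero subcase, where \(L=0\) may hold identically in \(r\), so it cannot be avoided by choosing \(r\). One must check that the hypotheses of case~(2) are exactly what survives: \(a_r=0\) and \(e_r\neq1\).

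A secondary point is to confirm that the exclusions come from nonzero polynomials in \(r\) of bounded degree. For the \(L=0\) condition when \(a\neq0\), this holds because the coefficient of \(r^3\) is \(a\). The resulting count of at most twelve bad values is tiny compared with \(2^m-1\) for \(m\geq16\).
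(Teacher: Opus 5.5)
Your proposal is correct and matches the paper's proof essentially step for step. It makes the same basis choice, namely the two syndromes outside \(\ker\pi_3\) in rank one. It excludes the same at most twelve values of \(r\), coming from the cubics \(ar^3+br+a^2\), \(ar^3+dr+a^2\) and \(ar^3+(b+\omega(b+d))r+a^2\). In rank zero it uses the same observation that \(K/e_r=b/(b+d)\) is independent of \(r\), and falls back to case~(2) of Proposition~\ref{prop:fl-equal-a-genus} after excluding \(r^5=b+d\).
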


\begin{proof}
First suppose the projection has rank one.  Choose as \(\bfa^\T\) and
\(\bfb^\T\)
the two nonzero syndromes outside its kernel.  Then
\[
        \bfa^\T=(0,a,b)^\T,\qquad
        \bfb^\T=(0,a,d)^\T,
        \qquad a\neq0,\quad b+d\neq0.
\]
The conditions \(K=0\) and \(K+e_r=0\) are respectively
\[
        ar^3+br+a^2=0,
\qquad
        ar^3+dr+a^2=0,
\]
and together exclude at most six values of \(r\).  If \(L=0\), then
\[
        K/e_r\in\{\omega,\omega^2\},
\qquad
        \omega^2+\omega+1=0.
\]
For each primitive cube root \(\omega\in F\), the corresponding
condition is
\[
        ar^3+\bigl(b+\omega(b+d)\bigr)r+a^2=0,
\]
another nonzero cubic.  Thus \(L=0\) excludes at most six additional
values.  The field \(F\) contains primitive cube roots of unity
exactly when \(m\) is even, since \(3\mid 2^m-1\) exactly in that
case.  Thus for odd \(m\) this last exceptional set is empty.  In all cases, at most
twelve values of \(r\) are excluded, so a permissible \(r\) exists
for \(m\geq16\).  The first part of
Proposition~\ref{prop:fl-equal-a-genus} then gives genus \(127\).

If the projection has rank zero, write
\[
        \bfa^\T=(0,0,b)^\T,\qquad
        \bfb^\T=(0,0,d)^\T,\qquad b+d\neq0.
\]
Because \(\bfa^\T\) and \(\bfb^\T\) are nonzero, also \(b,d\neq0\); hence
\(K(K+e_r)\neq0\) for every nonzero \(r\).  Now
\[
        \frac{K}{e_r}=\frac{b}{b+d}
\]
is independent of \(r\).  Unless this ratio is a primitive cube root
of unity, \(L\neq0\) and the first part of
Proposition~\ref{prop:fl-equal-a-genus} applies.  In the exceptional
case, the second part applies after excluding \(e_r=1\), which is
equivalent to
\[
        r^5=b+d.
\]
This equation has at most five roots.  A permissible \(r\) again
exists, and the resulting genus is \(109\).  Thus the genus is at
most \(127\) in every rank-zero or rank-one case.
\end{proof}

\subsection{Rational points and the Hasse--Weil thresholds}

A rational point on a simultaneous splitting cover is useful only if
its image \(s\) is finite, avoids the branch locus, and lies outside
\(\{0,1\}\).  We first use a uniform subtraction for these invalid
fibers.

\begin{proposition}[Uniform thresholds]
\label{prop:fl-uniform-thresholds}
Let \(W\) be an all-zero-first \(555\) plane.

\begin{enumerate}
\item If \(\dim_{\F_2}\pi_3(W)=2\), then, for every \(m\geq18\),
      there are a basis \(\bfa^\T,\bfb^\T\) of \(W\) and
      \(r\in F^*\) such that \(\mathcal X_{\bfa,\bfb,r}\) has a good
      \(F\)-rational point.
\item If \(\dim_{\F_2}\pi_3(W)\leq1\), then, for every
      \(m\geq16\), there are a basis \(\bfa^\T,\bfb^\T\) of \(W\)
      and \(r\in F^*\) such that \(\mathcal X_{\bfa,\bfb,r}\) has a
      good \(F\)-rational point.
\end{enumerate}
\end{proposition}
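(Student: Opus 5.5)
We apply the Hasse--Weil bound to the regular covers constructed above and subtract the points lying over the invalid fibers. In case (1), choose \(r\in F^*\) outside the at most six values excluded by Lemma~\ref{lem:fl-kappa-zero}; this is possible since \(q-1>6\). Since \(\pi_3(W)\) has rank two, the third-power coordinates \(a,c\) of any basis are distinct, and scaling by \(r^{-3}\) preserves this, so \(a_r\neq c_r\). Proposition~\ref{prop:fl-distinct-a-genus} then gives a regular cover \(\mathcal X_{\bfa,\bfb,r}\) of genus \(g=181\). In case (2), Lemma~\ref{lem:fl-low-rank-locator} supplies a basis and \(r\) with a regular cover of genus \(g\leq127\). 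Regularity means \(\mathcal X_{\bfa,\bfb,r}\) is a geometrically irreducible smooth projective curve over \(F\), so Hasse--Weil gives
\[
        \#\mathcal X_{\bfa,\bfb,r}(F)\geq q+1-2g\sqrt q .
\]

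Next we bound the bad rational points by \(36\) times the number of bad base values. The bad base values are \(s=\infty\), \(s\in\{0,1\}\), and the branch points, which are the zeros of \(D_{\bfa_r}D_{\bfb_r}\), at most four values. That is at most seven base values. Above each there are at most \(\deg=36\) points, so at most \(252\) rational points are bad. A good point therefore exists as soon as
\[
        q+1-2g\sqrt q>252 .
\]
With \(q=2^m\), this means \(2^{m/2}>2g\) up to a small margin. For \(g=181\) and \(m=18\), we get \(2^9=512\) against \(2g=362\), so \(q-2g\sqrt q=512(512-362)=76800>251\). The left side increases in \(m\) from there, which proves (1) for \(m\geq18\). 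For \(g\leq127\) and \(m=16\), we get \(2^8=256\) against \(254\), so \(q-2g\sqrt q\geq 256\cdot2=512>251\). The expression increases in \(m\) and decreases in \(g\), which proves (2) for \(m\geq16\). For odd \(m\), \(\sqrt q\) is irrational, but the inequality \(q-2g\sqrt q>251\) is still monotone once \(\sqrt q>2g\), so a single check at the threshold suffices.

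The main obstacle is keeping the count of excluded fibers honest, so that the margin at \(m=16\) survives. That margin is tight: \(256\) versus \(254\) leaves only \(512\), compared with \(251\). Ramified fibers contain fewer than \(36\) points, so bounding them by \(36\) is safe. The fiber at infinity may also be ramified, depending on how the pole of \(N/D\) behaves, and we must check that the proposition's genus computations already account for it. Even so, bounding every bad fiber crudely by \(36\) points per base value keeps the subtraction valid without knowing the ramification there. Finally, good points automatically satisfy \(D\neq0\), as Lemma~\ref{lem:fl-completing-cubic} requires, because the zeros of \(D\) are branch points and are excluded by definition.
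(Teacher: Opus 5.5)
Your proposal is correct and follows the paper's argument: apply Hasse--Weil to the regular covers from Proposition~\ref{prop:fl-distinct-a-genus} (rank two) and Lemma~\ref{lem:fl-low-rank-locator} (rank at most one), then subtract the rational points lying over \(s=\infty\), \(s\in\{0,1\}\) and the branch fibers. The only difference is that you bound every bad fiber crudely by \(36\) points, for a total of \(252\), whereas the paper uses the ramification to get the sharper counts \(180\) and \(126\); your count still suffices, since \(q+1-2g\sqrt q\) equals \(76801>252\) at \(m=18\) and \(513>252\) at \(m=16\).
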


\begin{proof}
In projection rank two, choose a basis with
\[
        a\neq0,\qquad c\neq0,\qquad a\neq c.
\]
Choose \(r\in F^*\) after excluding the at most six values from
Lemma~\ref{lem:fl-kappa-zero}.  Proposition~\ref{prop:fl-distinct-a-genus}
gives a degree-\(36\), genus-\(181\) cover.  At most \(36\) rational
points lie above infinity.  The four branch fibers contain at most
\[
        4\left(\frac{36}{2}\right)=72
\]
geometric points, and the two fibers \(s=0,1\) contain at most
\(2\cdot36=72\) points.  Thus at most \(180\) rational points are
discarded.  Hasse--Weil supplies a good point as soon as
\[
        q+1-2\cdot181\sqrt q>180.
\]
For powers of two this holds at \(q=2^{18}\), and the left-hand side
is increasing throughout the range \(q\geq2^{18}\).

If the projection rank is at most one, choose \(r\) as in
Lemma~\ref{lem:fl-low-rank-locator}.  The cover has degree \(36\) and
genus at most \(127\).  There are at most \(36\) points above
infinity, at most
\[
        2\left(\frac{36}{4}\right)=18
\]
above the two common branch fibers, and at most \(72\) above
\(s=0,1\).  The bad-point bound is therefore \(126\).  At
\(q=2^{16}\),
\[
        q+1-2\cdot127\sqrt q
        =
        65537-254\cdot256
        =
        513>126.
\]
The same inequality persists for every larger \(q\).
\end{proof}

The only case not covered by
Proposition~\ref{prop:fl-uniform-thresholds} in the desired algebraic
range is projection rank two at \(m=17\).  In odd extension degree a
trace choice eliminates the bad rational fibers instead of subtracting
them.

\begin{proposition}[Boundary case \(m=17\)]
\label{prop:fl-boundary-m17}
Let \(m=17\), and let
\[
        W=\{\bzero^\T,\bfa^\T,\bfb^\T,(\bfa+\bfb)^\T\}
\]
be an all-zero-first
\(555\) plane.  If
\[
        \dim_{\F_2}\pi_3(W)=2,
\]
then there are a basis \(\bfa^\T,\bfb^\T\) of \(W\) and
\(r\in F^*\) such that \(\mathcal X_{\bfa,\bfb,r}\) has a good
\(F\)-rational point.
\end{proposition}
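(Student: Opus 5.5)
The approach is to keep the cover \(\mathcal X_{\bfa,\bfb,r}\) of Proposition~\ref{prop:fl-distinct-a-genus} (regular, Galois, degree \(36\), group \(S_3\times S_3\), genus \(181\)) and replace the subtraction of bad fibers by a choice of basis and of \(r\) under which the bad fibers contain no \(F\)-rational points at all.

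At \(q=2^{17}\) the Hasse--Weil lower bound is \(q+1-362\sqrt q\approx15\). This is positive but far below the \(180\) discarded points of Proposition~\ref{prop:fl-uniform-thresholds}. So if every \(F\)-rational point lies in a good fiber, the bound gives a good point. Moreover, above an unramified rational base value the rational points form a full orbit of \(36\), so one good point already forces a whole split fiber.

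The work proceeds in three steps, one for each type of bad fiber.

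\emph{Branch fibers.} The branch points are the roots of \(s^2+s+a_r\) and \(s^2+s+c_r\). Since \(m=17\) is odd, such a quadratic has no root in \(F\) exactly when \(\tr(a_r)=1\), respectively \(\tr(c_r)=1\). With no rational branch value, no rational point lies above the branch locus. In projection rank two the three values \(a_r,c_r,a_r+c_r\) are nonzero and distinct, and their traces sum to zero. Hence as soon as not all three traces vanish, exactly two of them equal \(1\). I would choose the basis \(\bfa^\T,\bfb^\T\) to be those two syndromes. It remains to find \(r\) with \(\tr(a/r^3)=1\) for some nonzero third-power coordinate \(a\) of \(W\). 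Since \(\gcd(3,2^{17}-1)=1\), the map \(r\mapsto a/r^3\) is a bijection of \(F^*\), so about half of all \(r\) work. There is ample room to impose this together with the at most six exclusions of Lemma~\ref{lem:fl-kappa-zero} and the conditions of the next two steps.

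\emph{Fibers \(s=0,1\).} Here \(D=a_r\) and \(N=b_r\) at both points, so both fibers are governed by the same pair of specialized cubics \(y^3+(b_r/a_r)y+a_r\) and \(y^3+(d_r/c_r)y+c_r\). These fibers are unramified, so they contain rational points only if both cubics split. I would choose \(r\) so that at least one of them fails to split; for instance, its quadratic resolvent should have the wrong trace, or it should have a unique root in \(F\). To show that the condition is satisfiable simultaneously with the trace conditions, I would count the admissible \(r\) with a small genus curve in \(r\); a crude character-sum estimate should suffice with about \(2^{17}\) values available.

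\emph{The fiber at infinity.} This is the step I expect to be the main obstacle. Infinity is unramified, but the reduced cubic degenerates there: after \(y=sw\) it becomes \(w^3+w+1/s\), whose reduction has a double root. So the Frobenius class must be read off from a finer local expansion. I would compute the decomposition group at infinity for each factor from the local quadratic resolvent and a Kummer/Cardano parameter, as in Appendix~\ref{sec:fl-ramification-details}, expressing it through traces of simple functions of \(a_r,b_r\) (respectively \(c_r,d_r\)). The aim is that one further trace condition on \(r\) makes the Frobenius at infinity nontrivial in one factor, so the fiber has no rational points.

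If infinity turns out to be always totally split, I would fall back on one of two alternatives. The first is to change the normalization of the prescribed pair, for example to \(\{1,1+1/s\}\), moving the degeneracy to an unramified finite value that can be controlled by traces. The second is to replace the single Hasse--Weil estimate by an inclusion--exclusion over the intermediate quotient covers. The count of \(s\) with both cubics split is then a main term \(q/36\) plus error terms from the genus-\(13\) factor covers and their quadratic subcovers, which are much smaller than \(181\sqrt q\).
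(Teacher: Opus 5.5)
Your overall strategy matches the paper's. You keep the genus-\(181\) cover of Proposition~\ref{prop:fl-distinct-a-genus}, use odd \(m\) to choose traces so that no bad fiber carries an \(F\)-rational point, and let the Hasse--Weil bound (about \(15\) points) finish. Your branch-fiber step is sound. The paper solves \(\tr(a\tau)=\tr(c\tau)=1\) for \(\tau=r^{-3}\) via linear independence of the two trace forms. You instead choose \(r\) first and then take as basis the two syndromes with normalized trace one; this works equally well, with at most nine \(\kappa\)-exclusions.

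The genuine gap is the fiber at infinity, which you leave open. The missing point is already in Lemma~\ref{lem:fl-quadratic-resolvent}. At infinity \(\mathcal H_\kappa+\wp(s)=\alpha+O(s^{-1})\), so the quadratic resolvent is unramified there with residue equation \eqref{eq:fl-infinity-residue}, namely \(\eta^2+\eta=a_r\) for the first syndrome. The condition \(\tr(a_r)=1\), which you already impose for the branch fibers, makes this resolvent inert at infinity. Hence every place of the first splitting field above infinity has residue degree at least two, and no point of \(\mathcal X_{\bfa,\bfb,r}\) above infinity is \(F\)-rational. No further trace condition is needed. The double root of \(w^3+w\) only reflects that the cubic factors locally as a linear factor times a quadratic governed by this resolvent.

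Your fallbacks would not close the gap. The reparametrization \(\{1,1+1/s\}\) only renames the same degenerate fiber. An inclusion--exclusion over characters of \(S_3\times S_3\) must use all nine irreducible characters, because the indicator of trivial Frobenius is \(\frac1{36}\sum_\chi\chi(1)\chi\). The mixed characters, such as \(\varrho\otimes\varrho\) with \(\varrho\) the two-dimensional character of \(S_3\), carry most of the genus, not the genus-\(13\) factors. The total error is therefore \(2g(\mathcal X_{\bfa,\bfb,r})\sqrt q/36\), the same as applying Hasse--Weil to \(\mathcal X_{\bfa,\bfb,r}\) and dividing by \(36\). At \(q=2^{17}\) this leaves a margin of less than one value of \(s\), so it cannot absorb even one bad fiber.

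Finally, your treatment of \(s=0,1\) rests on an unproved ``should suffice'' count, but none is needed. At \(s=0\) the prescribed pair \(\{1,1\}\) cancels. At \(s=1\) it is \(\{1,0\}\) with \(h(0)=\bzero^\T\). In either case a split completing cubic would represent a deep syndrome by at most three, respectively four, nonzero columns. These fibers therefore contain no rational points for every \(r\). With infinity handled as above, every \(F\)-rational point is good.
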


\begin{proof}
Choose a basis
\[
        \bfa^\T=(0,a,b)^\T,
        \qquad
        \bfb^\T=(0,c,d)^\T
\]
with \(a,c\neq0\) and \(a\neq c\).  Since \(m\) is
odd, the cube map permutes \(F^*\).  Put
\[
        \tau\eqdef r^{-3}.
\]
The two trace forms
\[
        \tau\longmapsto\tr(a\tau),
\qquad
        \tau\longmapsto\tr(c\tau)
\]
are linearly independent over \(\F_2\): otherwise the
nondegeneracy of the trace pairing would give \(a=c\).  Hence the
affine system
\[
        \tr(a\tau)=1,\qquad
        \tr(c\tau)=1
\]
has exactly \(q/4\) solutions, none of which is zero.  The bijection
\(r\mapsto r^{-3}\) therefore gives \(q/4\) candidate locators.
After the at most six exclusions in
Lemma~\ref{lem:fl-kappa-zero}, at least
\[
        \frac q4-6=32762
\]
choices remain.

Fix one such \(r\).  Then
\[
        \tr(a_r)
        =
        \tr(c_r)
        =
        1.
\]
The roots of \(s^2+s+a_r\) and \(s^2+s+c_r\) are nonrational, so none
of the four branch fibers lies over an \(F\)-rational base point.
At infinity, let \(\eta_{\bfa}\) and \(\eta_{\bfb}\) denote the residue
variables in the quadratic-resolvent extensions associated with
\(\bfa_r^\T\) and \(\bfb_r^\T\), respectively.
Equation~\eqref{eq:fl-infinity-residue} gives
\[
        \eta_{\bfa}^2+\eta_{\bfa}=a_r,
\qquad
        \eta_{\bfb}^2+\eta_{\bfb}=c_r.
\]
They have no solution in \(F\), so the full splitting cover has no
\(F\)-rational point above infinity.

There are also no rational points above \(s=0\) or \(s=1\).  At
\(s=0\), the prescribed pair is \(\{1,1\}\) and cancels in the binary
column sum.  A split completing cubic would therefore represent a
deep syndrome by at most three nonzero columns.  At \(s=1\), the pair
is \(\{1,0\}\); since \(h(0)=\bzero^\T\), a split completing cubic would
represent a deep syndrome by at most four nonzero columns.  Both
possibilities contradict depth five.

Thus every \(F\)-rational point of the smooth projective cover is
good.  Proposition~\ref{prop:fl-distinct-a-genus} gives genus \(181\),
and Hasse--Weil yields
\[
\begin{aligned}
 \#\mathcal X_{\bfa,\bfb,r}(F)
 &\geq
 2^{17}+1-2\cdot181\sqrt{2^{17}}\\
 &=131073-92672\sqrt2>0.
\end{aligned}
\]
The last inequality is exact, because
\[
        131073^2-2\cdot92672^2=3932161>0.
\]
Hence a good rational point exists.
\end{proof}

\subsection{From a good point to covering radius eight}

\begin{lemma}[Coding validity of a good point]
\label{lem:fl-good-point-support}
Let \(\bfa^\T,\bfb^\T,(\bfa+\bfb)^\T\) be deep.  A good rational point of
\(\mathcal X_{\bfa,\bfb,r}\) gives weight-five leaders of
\(\bfa^\T\) and \(\bfb^\T\) whose
intersection is exactly the prescribed pair.  Their support union
therefore has size eight.
\end{lemma}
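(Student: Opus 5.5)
Fix a good rational point \(P\) of \(\mathcal X_{\bfa,\bfb,r}\) with image \(s\in F\setminus\{0,1\}\) outside the branch locus. The plan has four steps: split both cubics, read off two five-column representations, prove they are honest five-element sets, and then show they are leaders meeting exactly in the prescribed pair. The only real work is the middle check, and the deepness hypotheses will do it.

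First, the splitting. Because \(s\) is unramified and \(P\) is \(F\)-rational on the regular Galois cover, the decomposition group at \(P\) is trivial, as noted before Proposition~\ref{prop:fl-distinct-a-genus}. Since \(s\) avoids the zeros of \(D_{\bfa_r}\) and \(D_{\bfb_r}\), both specialized cubics \(f_{\bfa_r}\) and \(f_{\bfb_r}\) are defined at \(s\) and split completely over \(F\). Lemma~\ref{lem:fl-completing-cubic} then gives triples \(\gamma_1,\gamma_2,\gamma_3\) and \(\delta_1,\delta_2,\delta_3\) in \(F\). These satisfy
\[
 h(1)+h(1+s)+\textstyle\sum_i h(\gamma_i)=\bfa_r^\T,
\]
and the analogous identity for \(\bfb_r^\T\).

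Second, the main step: showing each multiset \(\{1,1+s,\gamma_1,\gamma_2,\gamma_3\}\) is a genuine five-element subset of \(F^*\). Take \(\bfa\); the argument for \(\bfb\) is identical. If any coincidence or zero occurred, the repeated pair would cancel in the binary sum, or the zero column would contribute nothing, because \(h(0)=\bzero^\T\). The syndrome \(\bfa_r^\T\) would then be represented by at most three columns. This contradicts its deepness, since scaling by \(r\) preserves coset weight. The same argument covers the case where some \(\gamma_i\) equals \(1\) or \(1+s\), and the case \(s\in\{0,1\}\) is excluded by goodness. Nonsquarefree cubics are therefore impossible. This avoids relying on discriminant computations: I would simply argue that any multiplicity or any element outside \(F^*\setminus\{1,1+s\}\) drops the weight below five.

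Third, the supports are leaders. Each is a five-element representation of a syndrome of coset weight five, so each is a leader of \(\bfa_r^\T\) or \(\bfb_r^\T\). Multiplying all locators by \(r\) transports them to leaders \(E,G\) of \(\bfa^\T,\bfb^\T\) containing the common pair \(\{r,r(1+s)\}\).

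Finally, the intersection is exactly that pair, and the union has size eight. Suppose \(|E\cap G|\geq3\). Then \(|E\mathbin{\triangle}G|=10-2|E\cap G|\leq4\). Since \(E\mathbin{\triangle}G\) represents the deep syndrome \((\bfa+\bfb)^\T\), this contradicts \(\ell((\bfa+\bfb)^\T)=5\). Hence \(|E\cap G|=2\), and \(|E\cup G|=5+5-2=8\).
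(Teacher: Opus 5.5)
Your proposal is correct and follows essentially the paper's argument: split both cubics at the good value of \(s\), use deepness of \(\bfa^\T\) and \(\bfb^\T\) to rule out zero or colliding locators, and use deepness of \((\bfa+\bfb)^\T\) via \(|E\mathbin{\triangle}G|=10-2|E\cap G|\) to force the intersection to be exactly the prescribed pair. The differences are cosmetic. You derive distinctness of the completing roots from deepness rather than from \(s\) lying off the branch locus. Also, a zero locator leaves at most four nonzero columns, not three, but this still contradicts depth five.
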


\begin{proof}
At a good value of \(s\), the prescribed locators are distinct and
nonzero, both completing cubics split, and the point lies away from
the branch locus.  The completing roots are therefore distinct.
After translating the roots as in
Lemma~\ref{lem:fl-completing-cubic} and scaling back by \(r\), we
obtain two representations with five locators each.

A zero completing locator, or a collision between a completing
locator and the prescribed pair, would reduce one of these
representations to at most four nonzero columns, contrary to the
deepness of \(\bfa^\T\) or \(\bfb^\T\).  Thus both representations are genuine
weight-five leaders.  If their completing triples shared an additional
locator, then the symmetric difference of the two leaders would
represent \((\bfa+\bfb)^\T\) with at most four columns, again contradicting
depth.  The leaders consequently intersect in exactly two locators,
and their union has cardinality
\[
        5+5-2=8.
\]
\end{proof}

\begin{theorem}[Exact radius for \(m\geq17\)]
\label{thm:fl-stable-range}
For every \(m\geq17\),
\[
        R_2\!\left(\bch(3,m)\right)=8.
\]
\end{theorem}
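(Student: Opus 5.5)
The plan is to assemble the pieces already established. The lower bound \(R_2(C_m)\geq8\) is \eqref{eq:fl-published-lower-bound}, the \(e=3\) instance of \cite[Theorem~3]{YohananovSchwartz2025}; its hypothesis \(2^{\lceil m/2\rceil}\geq5\) clearly holds for \(m\geq17\). It therefore remains to show \(d_2(\bfa^\T,\bfb^\T)\leq8\) for every pair of syndromes \(\bfa^\T,\bfb^\T\in F^3\), since \(R_2(C_m)\) is the maximum of \(d_2\).

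I argue by contradiction: suppose some pair has \(d_2(\bfa^\T,\bfb^\T)>8\). Because \(m\geq17\geq8\), Lemma~\ref{lem:ca-zero-first-555} applies. The pair is then linearly independent, and the plane \(W\) it spans is an all-zero-first \(555\) plane. Note that \(d_2\) depends only on \(W\): changing basis by \(\operatorname{GL}_2(\F_2)\) preserves support unions, since \(E\), \(G\), \(E\mathbin{\triangle}G\) represent the three nonzero elements and \(E\cup G\) is symmetric in any two of them. So I may replace the given basis by whichever basis the algebraic propositions select.

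Next I split on \(\dim_{\F_2}\pi_3(W)\in\{0,1,2\}\).
\begin{itemize}
\item If the rank is at most one, part (2) of Proposition~\ref{prop:fl-uniform-thresholds} gives, for \(m\geq16\), a basis and an \(r\in F^*\) for which \(\mathcal X_{\bfa,\bfb,r}\) has a good \(F\)-rational point.
\item If the rank is two and \(m\geq18\), part (1) of the same proposition gives the same conclusion.
\item If the rank is two and \(m=17\), Proposition~\ref{prop:fl-boundary-m17} supplies the good point.
\end{itemize}
In every case, Lemma~\ref{lem:fl-good-point-support} turns the good point into weight-five leaders of the two chosen basis syndromes that meet exactly in the prescribed pair. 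Their union has size \(8\), so \(d_2\leq8\), which contradicts the assumption.

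One point needs care: the leaders from Lemma~\ref{lem:fl-good-point-support} represent the normalized syndromes \(\nu_r(\cdot)\). I must multiply every locator back by \(r\) to get representations of the original syndromes. This works because scaling is a bijection on \(F^*\) that preserves cardinalities and intersections, and because \(h(rx)=(rx,r^3x^3,r^5x^5)^\T\) inverts \(\nu_r\). The lemma's hypothesis that all three nonzero elements are deep is exactly the \(555\) conclusion of the reduction.

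I expect the main obstacle to be bookkeeping rather than mathematics. First, the basis and \(r\) chosen inside the propositions have to be compatible with the hypotheses of Lemma~\ref{lem:fl-good-point-support}. Second, the case ranges must cover all \(m\geq17\) with no gap, which the trichotomy above does. Combining the upper bound with \eqref{eq:fl-published-lower-bound} then gives \(R_2(\bch(3,m))=8\).
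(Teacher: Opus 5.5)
Your proposal is correct and follows the paper's proof essentially step for step. It uses the reduction of Lemma~\ref{lem:ca-zero-first-555}, the case split on \(\dim_{\F_2}\pi_3(W)\) handled by Propositions~\ref{prop:fl-uniform-thresholds} and~\ref{prop:fl-boundary-m17}, and Lemma~\ref{lem:fl-good-point-support} together with the basis-independence of the support union; arguing by contradiction instead of by a direct case split is only a cosmetic difference.
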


\begin{proof}
Let \(\bfa^\T,\bfb^\T\in F^3\) be arbitrary, and put
\[
        W\eqdef
        \operatorname{span}_{\F_2}\{\bfa^\T,\bfb^\T\}.
\]
If \(\dim_{\F_2}W=0\), the claim is trivial.  If
\(\dim_{\F_2}W=1\), then one leader of its unique nonzero syndrome
covers both \(\bfa^\T\) and \(\bfb^\T\), and
\[
        d_2(\bfa^\T,\bfb^\T)\leq5.
\]
It remains to consider \(\dim_{\F_2}W=2\).  Suppose first that it is not an
all-zero-first \(555\) plane.  Then
the contrapositive of Lemma~\ref{lem:ca-zero-first-555} gives
\(d_2(\bfa^\T,\bfb^\T)\leq8\).

Now suppose that \(W\) is an all-zero-first \(555\) plane.  If
\(\dim_{\F_2}\pi_3(W)\leq1\), the second part of
Proposition~\ref{prop:fl-uniform-thresholds} supplies a good point for
every \(m\geq17\).  If \(\dim_{\F_2}\pi_3(W)=2\), the first part supplies a
good point for \(m\geq18\), while
Proposition~\ref{prop:fl-boundary-m17} supplies one for \(m=17\).
Lemma~\ref{lem:fl-good-point-support} then gives two leaders of a basis
of \(W\) whose support union has size eight.  Their symmetric
difference represents the third nonzero element of \(W\) and is still
contained in the same union.  Thus this eight-column union also
represents the originally chosen pair \(\bfa^\T,\bfb^\T\).  Hence
\[
        R_2(C_m)\leq8
        \qquad(m\geq17).
\]
The lower bound \eqref{eq:fl-published-lower-bound} now proves
equality.
\end{proof}



\section{The certified finite range}
\label{sec:finite-range-summary}

The point-counting argument proves the family theorem from \(m=17\)
onward.  The remaining dimensions are settled exactly, but the role of
the computation is narrower than a direct enumeration of every pair of
syndromes.  For \(m=5,6,7\), two independent whole-space procedures
enumerate the relevant \(455\) and \(555\) syndrome planes.  At \(m=8\),
Lemma~\ref{lem:ca-zero-first-555} reduces the problem to the zero-first
slice, which is scanned directly.  For \(9\leq m\leq16\), locator scaling
reduces the all-zero-first \(555\) planes to finitely many pairs of
equivalence classes under locator scaling, which we call scaling-orbit
pairs.  A completion signature records which normalized locator pairs
can be extended to a leader; intersecting two signatures tests whether
the leaders can share a locator pair.  The runs for
\(9\leq m\leq12\) check all scaling-orbit pairs directly, while the
runs for \(13\leq m\leq16\) additionally identify pairs related by the
Frobenius automorphism.

Appendix~\ref{sec:ca-finite-certificates} defines the signatures, proves
the orbit and transport rules, states the exact verifier criterion, and
records the exhaustive terminal counts.  In particular, the verifier
checks a complete deterministic orbit list, uses exact finite-field
arithmetic, and accepts only after every required orbit pair is covered.

\begin{theorem}[Certified finite range]
\label{thm:ca-certified-finite-range}
\[
        R_2\!\left(\bch(3,m)\right)=8
        \qquad(5\leq m\leq16).
\]
\end{theorem}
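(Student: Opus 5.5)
The lower bound \(R_2(C_m)\geq8\) is the published estimate \eqref{eq:fl-published-lower-bound}, valid for all \(m\geq5\), so only \(R_2(C_m)\leq8\) needs proof. The plan is a computer-assisted certificate organised around the structural reduction. For each \(m\) we show that every two-dimensional syndrome space \(W\) has a basis with leaders \(E,G\) satisfying \(|E\cup G|\leq8\). Dependent pairs are trivial, since a single leader of weight at most five suffices. By the \(\operatorname{GL}_2(\F_2)\)-invariance and the bound \(d_2\leq a+b\) from the proof of Lemma~\ref{lem:ca-zero-first-555}, only the patterns \(455\) and \(555\) can fail.

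For \(m=5,6,7\), Lemma~\ref{lem:ca-chz-classification} is not available (it needs \(m\geq8\)), so we proceed directly.
\begin{enumerate}
\item Compute the coset weight of every syndrome of \(F^3\) by a breadth-first search over column sums of weight at most five.
\item Enumerate all planes with pattern \(455\) or \(555\).
\item For each such plane, search for a five-element leader of a deep element and a leader of a second element sharing enough locators: one shared locator with a four-leader for \(455\), two shared locators for \(555\).
\end{enumerate}
For \(m=8\), Lemma~\ref{lem:ca-zero-first-555} applies, so only all-zero-first \(555\) planes inside \(\{0\}\times F^2\) need scanning. For each such plane we test whether some normalised pair \(\{1,1+s\}\) and some scaling \(r\) make both completing cubics of Lemma~\ref{lem:fl-completing-cubic} split with valid roots. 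Lemma~\ref{lem:fl-good-point-support} then converts such a pair directly into an eight-column union.

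For \(9\leq m\leq16\) the slice is too large to scan pair by pair, so we quotient by symmetry. Locator scaling \(\nu_r\) and the Frobenius \(x\mapsto x^2\) both permute columns and preserve coset weights. For a deep zero-first syndrome \(\bfs^\T\), define its \emph{completion signature} as the set of unordered normalised locator pairs \(\{x,y\}\) that extend to a five-element leader; equivalently, the set of locator pairs \(\{u,v\}\) lying in some leader. A plane \(W=\{\bzero^\T,\bfa^\T,\bfb^\T,(\bfa+\bfb)^\T\}\) is certified as soon as the signatures of \(\bfa^\T\) and \(\bfb^\T\) share a pair. The argument of Lemma~\ref{lem:fl-good-point-support} then yields leaders meeting in exactly two locators, with union of size eight. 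To organise the computation:
\begin{enumerate}
\item Compute the signature once per scaling orbit, with an orbit representative.
\item Prove a transport rule: the signature of \(\nu_r(\bfs^\T)\) is the image of the signature of \(\bfs^\T\) under division by \(r\), and similarly under Frobenius.
\item Check each pair of orbits together with the relative scaling offset, and reduce the offsets by the joint stabiliser.
\end{enumerate}

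The main obstacle is making this orbit reduction both correct and complete. A plane is an \emph{ordered} pair of syndromes modulo simultaneous scaling, not a pair of orbits, so the verifier must range over all relative offsets \(r\), modulo the stabiliser. We must also prove that the deterministic orbit list exhausts every deep zero-first syndrome. For completeness we would compare orbit-size totals against the Charpin--Helleseth--Zinoviev count \((5n^2+13n)/6\) of zero-first deep cosets. Given the exact arithmetic and the exhaustive acceptance of every orbit pair, \(R_2\leq8\) follows for each \(5\leq m\leq16\), and equality follows from the lower bound.
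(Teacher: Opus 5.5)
Your proposal follows essentially the same route as the paper. It uses the published lower bound, exhaustive whole-space checks of the $455$ and $555$ planes for $m=5,6,7$, and a direct scan of the all-zero-first $555$ planes for $m=8$ via Lemma~\ref{lem:ca-zero-first-555}. For $9\leq m\leq16$ it checks completion signatures on scaling orbits, with Frobenius transport, over relative offsets, which is exactly the content of Theorem~\ref{thm:ca-verifier-correctness}. One point to state precisely: a shared pair is required only for offsets $\mu$ with $\bfa_i^\T+\mu\cdot\bfa_j^\T$ also deep (the condition $\mathcal M_{i,j}=\varnothing$). The paper stresses that this, and not full coverage $\mathcal Q_{i,j}=F^*$, is what its certificates establish.
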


\begin{proof}
For \(m=5,6,7\), the exhaustive whole-space enumerations
give no uncovered \(455\) or \(555\) plane.  For \(m=8\), the direct
zero-first scan, together with
Lemma~\ref{lem:ca-zero-first-555}, gives the upper bound eight.  For
\(9\leq m\leq16\), the archived orbit runs satisfy every hypothesis of
Theorem~\ref{thm:ca-verifier-correctness}.  The independent \(m=16\)
check additionally verifies that every required scaling-orbit-pair
representative occurs exactly once, with no gaps or overlaps.  Hence
the upper bound is eight
throughout the stated range.  The matching lower bound is
\cite[Theorem~3]{YohananovSchwartz2025}.
\end{proof}



\section{Proof of the main theorem and outlook}

\begin{proof}[Proof of Theorem~\ref{thm:main}]
Theorem~\ref{thm:ca-certified-finite-range} gives the upper bound eight
for
\[
        5\leq m\leq16,
\]
and Theorem~\ref{thm:fl-stable-range} gives the same upper bound for
\[
        m\geq17.
\]
The lower bound eight for every \(m\geq5\) is
\cite[Theorem~3]{YohananovSchwartz2025}.  Hence equality holds
throughout the stated range.
\end{proof}

The structural reduction leaves a single possible obstruction: an
all-zero-first \(555\) plane.  We resolve it by converting the search
for two shared columns into
simultaneous splitting of two cubics.  Algebraic point counting
handles the infinite range, including the boundary dimension
\(m=17\), while exact certificates close the finite transition range.

Two natural directions remain.  The first is to determine higher
generalized covering radii of \(\bch(3,m)\), where one must coordinate
more than two leaders and the geometry of simultaneous splitting covers
becomes correspondingly richer.  The second is to extend the
intersecting-column method to primitive BCH codes correcting more errors.  A
uniform replacement for the finite transition-range certificates would
also be desirable, even though the exact covering problem treated here
is complete.


\appendix
\numberwithin{equation}{section}
\numberwithin{table}{section}
\renewcommand{\thetheorem}{\Alph{section}.\arabic{theorem}}

\section{Ramification details for the one-syndrome cover}
\label{sec:fl-ramification-details}

This appendix supplies the local function-field calculations used in
Proposition~\ref{prop:fl-one-cover}.  The first lemma determines the
ramified quadratic layer, and the second shows that the cyclic cubic
layer adds no further ramification.


Throughout this appendix,
\[
\begin{aligned}
 \bfs_r^\T&=(0,\alpha,\beta)^\T,&
 D&=\alpha+s^2+s,&
 N&=\beta+s^4+s,\\
 \kappa&=\beta+\alpha^2+\alpha\neq0,&
 \lambda&=\frac ND,&
 \mu&=D,
\end{aligned}
\]
as in Section~\ref{sec:fl-stable-range}.

\begin{lemma}[Quadratic resolvent and ramification]
\label{lem:fl-quadratic-resolvent}
The quadratic resolvent of \(y^3+\lambda y+\mu\) has
Artin--Schreier class
\[
        \mathcal H_\kappa\eqdef1+\frac{\lambda^3}{\mu^2}
        =1+\frac{N^3}{D^5}.
\]
It is ramified exactly at the two geometric zeros of \(D\), both of
which are simple.  At each of them its reduced pole order is five, so
its local different exponent and the Artin conductor exponent of its
unique nontrivial character are both six.  It is unramified at infinity.
\end{lemma}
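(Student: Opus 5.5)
We would begin from the classical discriminant computation for a depressed cubic in characteristic two. For \(y^3+\lambda y+\mu\) with \(\mu\neq0\), the quadratic resolvent over \(F(s)\) is the Artin--Schreier extension \(w^2+w=1+\lambda^3/\mu^2\), up to coboundaries. One way to see this is to take the roots \(y_1,y_2,y_3\) and set \(w=\sum y_1^2y_2/\delta\) with \(\delta=\prod_{i<j}(y_i+y_j)=\mu\). A direct symmetric-function computation then gives \(w^2+w=(\lambda^3+\mu^2)/\mu^2\), which is the standard Berlekamp form. Substituting \(\lambda=N/D\) and \(\mu=D\) yields \(\mathcal H_\kappa=1+N^3/D^5\). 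We would simply cite this as the standard characteristic-two resolvent and not redo the symmetric-function algebra.

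For the ramification we localize at each place of \(\overline F(s)\). The function \(\mathcal H_\kappa\) is regular away from the zeros of \(D\) and from infinity, so no other places can ramify. The zeros of \(D=s^2+s+\alpha\) are simple, because \(D'=1\). At such a zero \(P\), equation \eqref{eq:fl-N-D-kappa} gives \(N\equiv\kappa\pmod D\). Hence \(N^3/D^5\) has a pole of exact order five at \(P\), with leading coefficient \(\kappa^3\neq0\). Since five is odd, no Artin--Schreier coboundary \(z^2+z\) can lower this pole order. The reduced pole order is therefore five and the place ramifies. By Stichtenoth's formula for Artin--Schreier extensions, the different exponent is \((5+1)(2-1)=6\), and for a degree-two extension this equals the conductor exponent of the nontrivial character.

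The step we expect to need the most care is infinity. There \(N\sim s^4\) and \(D\sim s^2\), so \(N^3/D^5\) has degree \(12-10=2\), a pole of order two at infinity. That pole is even, so we would remove it with coboundaries. Using \eqref{eq:fl-N-D-kappa}, expand
\[
 \frac{N^3}{D^5}=\frac{(D^2+D+\kappa)^3}{D^5}=D+1+\frac{1+\kappa}{D}+\frac{\kappa}{D^2}+\cdots.
\]
The lowest-order polar pieces \(D+1\) in \(1/s\) come from \(D^6/D^5\) and \(3D^5/D^5\). Now \(D=s^2+s+\alpha\) is itself \(u^2+u+\alpha\) with \(u=s\), which is a coboundary plus the constant \(\alpha\). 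Thus \(\mathcal H_\kappa\) is equivalent to a function regular at infinity whose value there is the constant \(\alpha\) (plus \(1+1=0\)), up to terms vanishing at infinity. After this reduction the class is regular at infinity, so infinity is unramified. Its residue equation \(\eta^2+\eta=\alpha\) would be recorded as \eqref{eq:fl-infinity-residue}.

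The main obstacle is bookkeeping this expansion exactly, so that every polar term at infinity is accounted for and the claimed residue constant is correct. We would verify it by writing \(N^3/D^5=D+1+R\) with \(\deg R<0\), which follows from the polynomial division of \((D^2+D+\kappa)^3\) by \(D^5\).
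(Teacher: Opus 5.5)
Your proposal is correct and follows essentially the paper's route: you cite the standard characteristic-two resolvent, get the odd pole order five at the simple zeros of \(D\), and kill the order-two pole at infinity with \(\wp(s)\) to reach the residue equation \(\eta^2+\eta=\alpha\); at the zeros of \(D\) you argue locally from \(v_P(N^3/D^5)=-5\), while the paper writes one global Artin--Schreier reduction using \(\theta^2=\kappa\), but the content is the same. One harmless slip: the \(D^{-2}\) coefficient of \((D^2+D+\kappa)^3/D^5\) is \(1\), not \(\kappa\), since the full expansion is \(D+1+(1+\kappa)D^{-1}+D^{-2}+(\kappa+\kappa^2)D^{-3}+\kappa^2D^{-4}+\kappa^3D^{-5}\), and this affects nothing because only \(D+1\) matters at infinity.
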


\begin{proof}
For a depressed cubic in characteristic two, the standard alternating
quadratic resolvent is
\[
        z^2+z=1+\frac{\lambda^3}{\mu^2},
\]
see~\cite[Lemma~2.2]{MarquesWard2017Classification}.  Substitution of
\(\lambda=N/D\) and \(\mu=D\) gives the asserted class
\(\mathcal H_\kappa=1+N^3/D^5\).

Choose \(\theta\in\overline F\) with \(\theta^2=\kappa\), and write
\(\wp(z)\eqdef z^2+z\).  Direct reduction gives
\begin{equation}
\label{eq:fl-exact-AS-reduction}
\begin{aligned}
 \mathcal H_\kappa
 &+\wp\!\left(
       s+\frac1D+\frac{\kappa}{D^2}+\frac{\theta}{D}
     \right)\\
 &=
 \alpha
 +\frac{\kappa+\theta}{D}
 +\frac{\kappa^2+\kappa}{D^3}
 +\frac{\kappa^3}{D^5}.
\end{aligned}
\end{equation}
Since \(D'=1\), the polynomial \(D\) has two simple geometric zeros.
At either zero, the right-hand side of
\eqref{eq:fl-exact-AS-reduction} is reduced and has exact pole order
five, because \(\kappa\neq0\).  The Artin--Schreier different formula
\cite[Proposition~3.7.8]{Stichtenoth2009} therefore gives local
different exponent six.  Since the quadratic extension has a unique
nontrivial character, the conductor--discriminant formula gives Artin
conductor exponent six as well.

No other finite pole occurs.  At infinity put \(t\eqdef s^{-1}\).  Direct
expansion yields
\[
        \frac{N^3}{D^5}
        =
        t^{-2}+t^{-1}+(\alpha+1)+O(t).
\]
Thus
\[
        \mathcal H_\kappa+\wp(t^{-1})
        =
        \alpha+O(t),
\]
so infinity is unramified, with residue equation
\begin{equation}
\label{eq:fl-infinity-residue}
        \eta^2+\eta=\alpha.
\end{equation}
\end{proof}

Cancellation in the quadratic resolvent does not by itself rule out
tame cubic inertia.  We record the required local Cardano--Kummer
calculation explicitly.

\begin{lemma}[Geometric local splitting of the cyclic cubic layer]
\label{lem:fl-cardano-kummer}
Over the geometric quadratic resolvent field, the cyclic cubic layer
of the splitting field is unramified at every place.  More precisely,
after tensoring with the completion at any place of that field, it
splits above every zero of \(N\), above infinity, and above each zero
of \(D\).
\end{lemma}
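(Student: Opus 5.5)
Over the geometric quadratic resolvent field \(K_2\eqdef\overline F(s)(z)\), with \(z^2+z=\mathcal H_\kappa\), the cubic \(y^3+\lambda y+\mu\) has cyclic Galois group \(C_3\). Since \(3\) is prime to the characteristic, this layer is a Kummer extension once cube roots of unity are adjoined, and they already lie in \(\overline F\). I would make the Kummer generator explicit by Cardano's formula in characteristic two. Writing \(y=u+v\) with \(uv=\lambda/\!\!\!\!\phantom{x}\) suitably normalized (in characteristic two, \(uv=\lambda\) up to the sign conventions, i.e.\ \(u v\) chosen so that the cross terms cancel), one gets \(u^3+v^3=\mu\) and \(u^3v^3=\lambda^3\). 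Hence \(u^3\) is a root of \(w^2+\mu w+\lambda^3\). Putting \(w=\mu z'\) turns this into \(z'^2+z'=\lambda^3/\mu^2\), which differs from \(\mathcal H_\kappa\) by the constant \(1\) and so defines the same field \(K_2\). Thus \(u^3=\mu z'\in K_2\), and the cyclic cubic layer is \(K_2(\sqrt[3]{\mu z'})\). Tame Kummer theory then reduces the unramifiedness claim at a place \(P\) of \(K_2\) to a single condition: \(v_P(\mu z')\equiv0\pmod 3\). For the stronger local splitting claim, I would additionally show that \(\mu z'\) is a cube in the completion, i.e.\ that its leading coefficient is a cube in \(\overline F\). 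That is automatic because \(\overline F\) is algebraically closed and the residue field is \(\overline F\). So the whole task is the valuation computation.

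\textbf{Places away from the zeros of \(N\), \(D\) and infinity.} Here \(\lambda,\mu\) are units and \(z'\) is integral. Moreover \(z'(z'+1)=\lambda^3/\mu^2\) is a unit, so \(z'\) is a unit and the valuation condition is trivially satisfied.

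\textbf{Zeros of \(N\).} These are disjoint from the zeros of \(D\) by \eqref{eq:fl-N-D-kappa} and \(\kappa\neq0\). At such a place \(\lambda^3/\mu^2\) has a zero of order \(3e\), where \(e=v(N)\). The two roots \(z'\) and \(z'+1\) of the Artin--Schreier equation then have valuations \(3e\) and \(0\), and \(\mu\) is a unit. Hence \(v(\mu z')\in\{0,3e\}\), which is divisible by \(3\).

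\textbf{Infinity.} The expansion in Lemma~\ref{lem:fl-quadratic-resolvent} shows that \(\lambda^3/\mu^2\) has a pole of order \(2\) in \(t=s^{-1}\), and infinity is unramified in \(K_2\). After subtracting \(\wp(t^{-1})\), the two roots are \(z'=t^{-1}+O(1)\) and \(z'+1\), both with valuation \(-1\). Since \(\mu=D\) has valuation \(-2\), this gives \(v(\mu z')=-3\).

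\textbf{Zeros of \(D\) (the main obstacle).} Here \(K_2\) is ramified with index \(2\), and \(\lambda^3/\mu^2=N^3/D^5\) has a pole of order \(5\) downstairs, hence order \(10\) at the place \(P\) of \(K_2\) above it. From \(z'^2+z'=N^3/D^5\) we get \(v_P(z')=-5\), while \(v_P(\mu)=2\). Therefore \(v_P(\mu z')=-3\equiv0\pmod3\), and \(P\) is unramified in the cubic layer.

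This is the step I expect to be delicate. One must verify that the ramification index really is \(2\), which is exactly the odd reduced pole order \(5\) from Lemma~\ref{lem:fl-quadratic-resolvent}. One must also check that it does not matter which root \(z'\) is chosen, since the other root \(z'+1\) has the same valuation \(-5\). With both points settled, the cube-root extension splits in the completion because the residue field \(\overline F\) is algebraically closed and the characteristic is prime to \(3\), so Hensel's lemma lifts the residue cube root. This gives splitting above every zero of \(N\), above infinity, and above each zero of \(D\), as claimed.
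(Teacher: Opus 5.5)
Your proposal is correct and follows the same route as the paper. The Cardano element \(u^3=\mu z'\) (the paper's \(\psi\), a root of \(x^2+\mu x+\lambda^3\)) generates the quadratic resolvent over \(\overline F\), its cube root gives the cyclic cubic layer, and its valuations are \(0\) or a multiple of \(3\) at zeros of \(N\), \(-3\) at infinity, and \(-3\) at zeros of \(D\) after the index-two normalization; the local Kummer criterion with Hensel's lemma then gives the splitting, exactly as in the paper (your garbled normalization should simply read \(uv=\lambda\), matching the paper's \(\varphi=\lambda/\upsilon\)).
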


\begin{proof}
Let \(\psi\) be a root of
\begin{equation}
\label{eq:fl-cardano-quadratic}
        x^2+\mu x+\lambda^3.
\end{equation}
Then \(\psi_0\eqdef\psi/\mu\) satisfies
\[
        \psi_0^2+\psi_0=\frac{\lambda^3}{\mu^2}.
\]
After choosing \(\rho\in\overline F\) with \(\rho^2+\rho=1\), the
element \(\psi_0+\rho\) generates the alternating quadratic
resolvent.  Thus the Cardano quadratic and the alternating resolvent
coincide over the geometric base; over \(F(s)\) they may differ by the
constant Artin--Schreier class \(1\).  This characteristic-two
relation is also described in
\cite[Lemma~1.9(2)]{KaremakerMarquesSijsling2021}.

If \(\upsilon^3=\psi\) and \(\varphi\eqdef\lambda/\upsilon\), then
\[
        \varphi^3=\psi+\mu,
\]
and \(\upsilon+\varphi\) is a root of \(y^3+\lambda y+\mu\).
Since the geometric constant field contains the cube roots of unity,
adjoining \(\upsilon\) gives the full cyclic cubic layer.

In a completion of the geometric quadratic resolvent field, \(\psi\)
is a cube whenever \(3\mid v(\psi)\): its unit residue is a cube in the
algebraically closed residue field, and Hensel's lemma lifts the
residue root.  This is the local Kummer criterion; see
\cite[Theorem~3.3.7 and Proposition~3.7.3]{Stichtenoth2009}.

The Newton polygons of \eqref{eq:fl-cardano-quadratic} give
\[
\begin{array}{c|ccc|c}
 \text{place}
 &v(\lambda)&v(\mu)&v(\lambda^3)
 &\text{valuations of the two choices of }\psi\\ \hline
 N=0      & 1  & 0  & 3  & 0,\ 3\\
 \infty   &-2  &-2  &-6  &-3,\ -3\\
 D=0      &-1  & 1  &-3  &-\frac32,\ -\frac32
\end{array}
\]
Here the zeros of \(N\) are simple and \(D\) is a unit there; at a
zero of \(D\), one has \(N=\kappa\neq0\).  By
Lemma~\ref{lem:fl-quadratic-resolvent}, the quadratic resolvent has
ramification index two at \(D=0\), so the last two valuations become
\(-3\) in its normalized valuation.  Thus, at every place of the
geometric quadratic resolvent field, the valuation of \(\psi\) is
divisible by three.  Away from the listed places, each nonzero Cardano
element is a unit.  The local Kummer criterion therefore shows that
the cyclic cubic layer splits at every completion.
\end{proof}

\begin{proof}[Proof of Proposition~\ref{prop:fl-one-cover}]
At either zero of \(D\), the Newton polygon of
\(Dy^3+Ny+D^2\) gives root valuations
\[
        2,\qquad-\frac12,\qquad-\frac12.
\]
Since \(\gcd(D,N)=1\), a rational root \(a_0/b_0\), written in lowest
terms, must satisfy
\[
        a_0\mid D^2,\qquad b_0\mid D.
\]
Its valuations at the two zeros of \(D\) are integral and hence both
equal two.  Therefore \(b_0\) is constant and
\(a_0/b_0=c_0D^2\).  Substitution gives
\[
        c_0^3D^5+c_0N+1=0,
\]
which is impossible by degrees.  The cubic is therefore geometrically
irreducible.  It is separable because a common root with its derivative
\(y^2+\lambda\) would force \(\mu=0\).

For an irreducible separable cubic, the quadratic resolvent
distinguishes the groups \(A_3\) and \(S_3\); see
\cite[Theorem~2.3]{MarquesWard2017Classification}.  The ramified,
nontrivial resolvent in Lemma~\ref{lem:fl-quadratic-resolvent} therefore
gives geometric Galois group \(S_3\).  The geometric splitting field
already has the maximal possible degree six, so the arithmetic group
is also \(S_3\), and the cover is regular.

Lemmas~\ref{lem:fl-quadratic-resolvent} and
\ref{lem:fl-cardano-kummer} show that the only ramification occurs at
the two zeros of \(D\), with inertia \(C_2\) and local different
exponent six.  Riemann--Hurwitz
\cite[Theorem~3.4.13]{Stichtenoth2009} gives
\[
        2g(\mathcal X_{\bfs_r})-2
        =-12+2\left(\frac{6}{2}\right)6=24,
\]
and hence \(g(\mathcal X_{\bfs_r})=13\).
\end{proof}


\section{Exact verification of the finite range}
\label{sec:ca-finite-certificates}

The algebraic point-counting argument applies for \(m\geq17\).  We
settle \(5\leq m\leq16\) by exact finite certificates.  This appendix
gives the mathematical reduction and the criterion verified by those
certificates; implementation details, complete outputs, and checksums
are archived in~\cite{Yohananov2026VerificationArtifact}.

\subsection{Completion signatures and scaling}

Identify a zero-first syndrome \((0,a,b)^\T\in F^3\) with
\((a,b)^\T\in F^2\), and put
\[
\begin{aligned}
 \mathcal T\eqdef\{\,(e,pe)^\T:\;&e\in F^*,\ p\in F,\\
 &y^3+py+e
 \text{ splits into three distinct linear factors over }F\,\}.
\end{aligned}
\]
Thus \(\mathcal T\) is precisely the set of zero-first syndromes
represented by three distinct nonzero locators.  Such a representation
is unique: two distinct triples with the same syndrome would have
symmetric difference equal to a nonzero codeword of weight at most six,
contrary to the BCH bound.  By the elementary zero-first fact preceding
Lemma~\ref{lem:ca-zero-first-555}, the deep zero-first set is
\[
        \mathcal D\eqdef F^2\setminus
        \bigl(\mathcal T\cup\{\bzero^\T\}\bigr).
\]

Let
\[
 \Delta\eqdef\{\delta\in F^*: \tr(\delta)=0\},
 \qquad
 R_\delta(s)\eqdef
 \bigl(\delta s^3,(\delta+\delta^2)s^5\bigr)^\T
 \quad(s\in F^*).
\]
The two roots of \(x^2+sx+\delta s^2\), together with \(s\), are a
zero-sum locator triple with reduced syndrome \(R_\delta(s)\).  For
\(\bfa^\T\in\mathcal D\), define its completion signature by
\begin{equation}
\label{eq:ca-signature}
 \Sigma_{\bfa}(\delta)
 \eqdef
 \{s\in F^*:R_\delta(s)+\bfa^\T\in\mathcal T\}.
\end{equation}
Locator scaling acts by
\begin{equation}
\label{eq:ca-scaling-action}
 \lambda\cdot(a,b)^\T
 \eqdef(\lambda^3a,\lambda^5b)^\T,
 \qquad \lambda\in F^*,
\end{equation}
and preserves \(\mathcal T\) and \(\mathcal D\).  For arbitrary
\(\mathcal U,\mathcal V\subseteq F^*\), write
\(\mathcal U\mathcal V^{-1}\eqdef
\{uv^{-1}:u\in\mathcal U,\ v\in\mathcal V\}\).

\begin{lemma}[Signature quotient criterion]
\label{lem:ca-signature-quotient}
Let \(\bfa^\T,\bfb^\T,(\bfa+\bfb)^\T\in\mathcal D\) be distinct.
If, for some \(\delta\in\Delta\),
\[
        \Sigma_{\bfa}(\delta)\cap
        \Sigma_{\bfb}(\delta)\neq\varnothing,
\]
then \(\bfa^\T\) and \(\bfb^\T\) have weight-five leaders sharing two
locators, and hence \(d_2(\bfa^\T,\bfb^\T)\leq8\).  More generally, the
same conclusion holds for \(\bfa^\T\) and
\(\mu\cdot\bfb^\T\) whenever
\[
 \mu\in Q_{\bfa,\bfb}(\delta)
 \eqdef
 \Sigma_{\bfa}(\delta)\Sigma_{\bfb}(\delta)^{-1}.
\]
\end{lemma}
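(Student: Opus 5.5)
The plan is to read a common element of the two signatures directly as a pair of shared locators, and then to treat the scaled version by transporting signatures along the scaling action \eqref{eq:ca-scaling-action}.

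\emph{Step 1: shared locators from a common element.} Suppose \(s\in\Sigma_{\bfa}(\delta)\cap\Sigma_{\bfb}(\delta)\). Let \(u,u'\) be the two roots of \(x^2+sx+\delta s^2\). Since \(\tr(\delta)=0\) and \(\delta\neq0\), these roots lie in \(F\), are distinct and nonzero, and are distinct from \(s\). Indeed \(u=s\) would force \(\delta=0\), and \(u+u'=s\) with \(u=0\) would give \(u'=s\). The triple \(\{s,u,u'\}\) has zero sum and reduced syndrome \(R_\delta(s)\). Membership \(R_\delta(s)+\bfa^\T\in\mathcal T\) gives three distinct nonzero locators \(E_0\) representing \(R_\delta(s)+\bfa^\T\). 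Hence \(\{s,u,u'\}\triangle E_0\) represents \(\bfa^\T\). Define \(G_0\) for \(\bfb^\T\) in the same way.

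\emph{Step 2: the supports have size five and share two locators.} Because \(\bfa^\T\) is deep, the support \(\{s,u,u'\}\triangle E_0\) has size at most six and at least five. Its size has the parity of \(6-2|E_0\cap\{s,u,u'\}|\), which is even. So \(|E_0\cap\{s,u,u'\}|\) is \(0\) or \(1\), and depth (size at least five) rules out the value one. Hence \(E_0\) is disjoint from the triple, and the representation has six locators. That is not yet a five-element leader. I therefore plan a different pairing: view \(\bfa^\T=h(u)+h(u')+(h(s)+\sum_{E_0})\), or more directly compare the two six-element representations \(\{s,u,u'\}\cup E_0\) and \(\{s,u,u'\}\cup G_0\). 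Their union has at most \(3+3+3=9\) locators, which is one too many. This is the main obstacle: the signature must be exploited more cleverly.

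The cure I would try uses the syndrome of \((\bfa+\bfb)^\T\), which is represented by \(E_0\triangle G_0\), a set of at most six locators. Since \((\bfa+\bfb)^\T\) is deep, \(|E_0\cap G_0|\leq0\) by the same parity argument. The actual cover of the pair is \(E_0\cup G_0\cup\{s,u,u'\}\), again nine locators.

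So instead I would reinterpret the construction as a five-column leader through the column \(s\) alone: \(R_\delta(s)\) equals \(h(u)+h(u')\) plus the contribution of \(h(s)\), and a cleverer reading sets \(\{u,u'\}\) as the shared pair. Each leader is then \(\{u,u'\}\cup T\), where \(T\) is a three-element representation of \(\bfa^\T+h(u)+h(u')\). Here I would check that \(\bfa^\T+h(u)+h(u')\) equals \(R_\delta(s)+\bfa^\T+h(s)\) and has nonzero first coordinate \(s\). Writing it as \(h(s)+\sum_{E_0}\) gives four locators. A weight-four expression removing one more locator would need Lemma~\ref{lem:ca-chz-classification}, so at this point I would recheck the intended definition of \(\mathcal T\) and \(R_\delta\) against Lemma~\ref{lem:fl-completing-cubic}.

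\emph{Step 3: the scaled statement.} Suppose \(\mu=uv^{-1}\) with \(u\in\Sigma_{\bfa}(\delta)\) and \(v\in\Sigma_{\bfb}(\delta)\). I would prove the transport rule \(\Sigma_{\lambda\cdot\bfb}(\delta)=\lambda\Sigma_{\bfb}(\delta)\). This follows from \(\lambda\cdot R_\delta(v)=R_\delta(\lambda v)\) together with the scaling invariance of \(\mathcal T\). Then \(u\in\Sigma_{\bfa}(\delta)\cap\Sigma_{\mu\cdot\bfb}(\delta)\), and the first part applies to \(\bfa^\T\) and \(\mu\cdot\bfb^\T\).
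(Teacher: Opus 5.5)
\textbf{There is a genuine gap in Step~2.} You never actually produce the weight-five leaders; you leave the step open. Your final reading is the right one: the shared pair is \(\{u,u'\}\), and you need a three-element representation of
\[
\bfa^\T+h(u)+h(u')=h(s)+\sum_{x\in E_0}h(x).
\]
The tool you reach for is the wrong one, and the definitions of \(\mathcal T\) and \(R_\delta\) do not need rechecking. Lemma~\ref{lem:ca-chz-classification} would at best tell you that this nonzero-first residual has coset weight at most four. Adjoining \(u,u'\) to such a representation only returns you to a six-element representation of \(\bfa^\T\).

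\textbf{The missing idea is an elementary translation.} Write \(E_0=\{\varepsilon_1,\varepsilon_2,\varepsilon_3\}\) with \(\sum_i\varepsilon_i=0\), so that also \(\sum_i\varepsilon_i^2=\sum_i\varepsilon_i^4=0\). Expanding in characteristic two gives
\[
 \sum_i(\varepsilon_i+s)=s,\qquad
 \sum_i(\varepsilon_i+s)^3=s^3+\sum_i\varepsilon_i^3,\qquad
 \sum_i(\varepsilon_i+s)^5=s^5+\sum_i\varepsilon_i^5.
\]
So the translate \(E_0+s\) represents \(h(s)+\sum_{x\in E_0}h(x)=\bfa^\T+h(u)+h(u')\). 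Therefore \(\{u,u'\}\cup(E_0+s)\) represents \(\bfa^\T\).

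Equivalently, look at your own six-element representation \(\{s,u,u'\}\cup E_0\). It has an even number of locators and zero sum, so its first, third and fifth power sums are unchanged by translation by \(s\). That translation sends \(s\) to \(0\) and swaps \(u\) and \(u'=u+s\). This is exactly how six locators drop to five.

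Your disjointness argument then gives validity at once: \(\varepsilon_i+s\in\{0,u,u'\}\) would force \(\varepsilon_i\in\{s,u',u\}\), which you have already excluded. Doing the same with \(G_0\) for \(\bfb^\T\) gives two five-element leaders both containing \(\{u,u'\}\). Their union therefore has at most eight locators. This is precisely the paper's argument (``translating the zero-sum triple represented by \(R_\delta(s)+\bfa^\T\) by \(s\)'').

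Your Step~3 is correct and matches the paper's scaling transport \(\Sigma_{\mu\cdot\bfb}(\delta)=\mu\Sigma_{\bfb}(\delta)\). Once Step~2 is repaired, it completes the proof.
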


\begin{proof}
Every pair of distinct nonzero locators \(\{\gamma_1,\gamma_2\}\) is
encoded uniquely by
\[
 s=\gamma_1+\gamma_2\neq0,
 \qquad
 \delta=\frac{\gamma_1\gamma_2}{s^2}\in\Delta;
\]
conversely, it is the root set of \(x^2+sx+\delta s^2\).  If
\(s\in\Sigma_{\bfa}(\delta)\), translating the zero-sum triple
represented by \(R_\delta(s)+\bfa^\T\) by \(s\) gives the three
locators completing this pair to a representative of \(\bfa^\T\).
Deepness excludes zero or a collision among the resulting five
locators, since cancellation would give a representative of weight at
most four.  Hence a common signature entry gives two weight-five
supports with union of size at most eight.  Finally,
\[
 \Sigma_{\mu\cdot\bfb}(\delta)=
 \mu\Sigma_{\bfb}(\delta),
\]
so the two signatures intersect exactly when
\(\mu=st^{-1}\) for some
\(s\in\Sigma_{\bfa}(\delta)\) and
\(t\in\Sigma_{\bfb}(\delta)\).
\end{proof}

The scaling orbits also admit a short exact parametrization.

\begin{lemma}[Complete scaling-orbit representatives]
\label{lem:ca-scaling-orbits}
On \(ab\neq0\),
\[
        I(a,b)\eqdef\frac{a^5}{b^3}
\]
is a complete invariant for~\eqref{eq:ca-scaling-action}.  Every such
orbit is free and has the unique diagonal representative
\((u,u)\), where \(u^2=I(a,b)\).  On \(b=0\) and \(a=0\), respectively,
the orbits are the cosets of \((F^*)^3\) and \((F^*)^5\); their numbers,
and their stabilizer orders, are \(\gcd(3,q-1)\) and
\(\gcd(5,q-1)\).
\end{lemma}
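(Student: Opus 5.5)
We prove the three assertions of Lemma~\ref{lem:ca-scaling-orbits} in turn: invariance and completeness of \(I\) together with freeness on \(ab\neq0\), existence and uniqueness of the diagonal representative, and the coset description on the two axes. Everything reduces to the exponent arithmetic of the action \(\lambda\cdot(a,b)^\T=(\lambda^3a,\lambda^5b)^\T\) in the cyclic group \(F^*\) of odd order \(q-1\).

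For \(ab\neq0\), invariance is the one-line computation \((\lambda^3a)^5/(\lambda^5b)^3=\lambda^{15-15}a^5/b^3=I(a,b)\). For completeness, suppose \(I(a,b)=I(a',b')\), and look for \(\lambda\) with \(\lambda^3=a'/a\) and \(\lambda^5=b'/b\). The natural choice uses \(2\cdot3-5=1\): put \(\lambda\eqdef(a'/a)^2(b/b')\). Then
\[
\lambda^3=(a'/a)^6(b/b')^3=(a'/a)\cdot\frac{a'^5/b'^3}{a^5/b^3}=a'/a,
\]
and similarly \(\lambda^5=(a'/a)^{10}(b/b')^5=(b'/b)\cdot\bigl(I(a',b')/I(a,b)\bigr)^2=b'/b\). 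Freeness is then immediate: if \(\lambda^3=1=\lambda^5\), then \(\lambda=\lambda^{6-5}=1\).

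For the representative, note that squaring is a bijection of \(F\). Hence there is a unique \(u\in F^*\) with \(u^2=I(a,b)\), and then \(I(u,u)=u^5/u^3=u^2=I(a,b)\). Because \(I\) is a complete invariant, \((u,u)\) lies in the orbit of \((a,b)\). It is the only diagonal point in that orbit, since a diagonal point \((v,v)\) has \(I(v,v)=v^2\), and \(v^2=u^2\) forces \(v=u\).

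On the axis \(b=0\) (with \(a\neq0\)), the orbit of \((a,0)\) is \(\{(\lambda^3a,0)\}\), so orbits correspond to cosets of \((F^*)^3\). In the cyclic group \(F^*\) this subgroup has index \(\gcd(3,q-1)\), and the stabilizer \(\{\lambda:\lambda^3=1\}\) has order \(\gcd(3,q-1)\). The axis \(a=0\) is handled the same way with exponent \(5\). The only step needing care is finding the explicit \(\lambda\) in the completeness argument, which the Bezout relation \(2\cdot3-5=1\) supplies. We also use that the orbit count is the index \(\gcd(k,q-1)\), which holds because \(F^*\) is cyclic.
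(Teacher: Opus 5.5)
Your proof is correct, but it orders the argument the other way round from the paper.

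The paper uses a normal form. Since squaring is an automorphism of $F$, the scalar $\lambda=\sqrt{a/b}$ is the unique solution of $\lambda^3a=\lambda^5b$. It sends $(a,b)$ to a diagonal point $(u,u)$ with $u^2=a^5/b^3$. Completeness of $I$ and uniqueness of the diagonal representative are then both read off from this one computation: two points with the same invariant are sent to the same $(u,u)$.

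You prove completeness first, using the explicit transporter $\lambda=(a'/a)^2(b/b')$ that comes from $2\cdot3-5=1$. You then locate the diagonal point through the invariant, using injectivity of $v\mapsto v^2=I(v,v)$.

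What your route buys is generality. Completeness of $I$ and freeness of the orbits hold over any field, and the perfectness of $F$ is needed only for the existence and uniqueness of the diagonal representative. What the paper's route buys is a direct normalizing scalar carrying an arbitrary vector to its diagonal representative. That is the natural form when one reduces a given syndrome to its stored orbit representative.

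Your freeness argument ($\lambda^3=\lambda^5=1\Rightarrow\lambda=1$) and your treatment of the two axes are the same as the paper's. Both rest on the cyclicity of $F^*$, which makes the cube and fifth-power cosets have index, and the stabilizers have order, $\gcd(3,q-1)$ and $\gcd(5,q-1)$.
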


\begin{proof}
The identity
\(I(\lambda^3a,\lambda^5b)=I(a,b)\) is immediate.  Since squaring is
an automorphism of \(F\), the unique choice
\(\lambda=\sqrt{a/b}\) sends \((a,b)\) to \((u,u)\) with
\(u^2=a^5/b^3\); this also proves completeness and uniqueness.
A generic stabilizer satisfies \(\lambda^3=\lambda^5=1\), hence is
trivial, and the two axis assertions are the corresponding power-coset
decompositions of the cyclic group \(F^*\).
\end{proof}

\subsection{The exact cubic test}

The following elementary normalization fixes the precise convention
used by the certificates.  Depressed-cubic and Artin--Schreier trace
tests in BCH locator calculations also appear in
van der Horst and Berger~\cite[Appendix~D]{VanDerHorstBerger1976};
compare~\cite{YohananovSchwartz2025,OzbudakOzturk2026Second}.

\begin{lemma}[Normalized split-cubic lookup]
\label{lem:ca-normalized-cubic}
Let \(f(y)\eqdef y^3+py+e\), where \(e\neq0\).  If \(p\neq0\), let
\(d^2=p\) and put \(c\eqdef e/d^3=e/(dp)\).  Then \(f\) splits into
three distinct linear factors over \(F\) if and only if
\(x^3+x+c\) does.  For \(c\neq0\), the latter condition is equivalent
to the existence of \(\zeta\in F\setminus\{0,1\}\) such that
\[
        c=\zeta^3+\zeta,
        \qquad
        \tr(1+\zeta^{-2})=0.
\]
If \(p=0\), the polynomial \(y^3+e\) never splits completely when \(m\)
is odd; when \(m\) is even, it splits distinctly if and only if \(e\)
is a cube.  The case \(e=0\) is rejected because the cubic has a
repeated root.
\end{lemma}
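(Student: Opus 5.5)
The plan is to handle the three cases in the order $p\neq0$, $p=0$, $e=0$, using nothing beyond scaling and two classical facts about polynomials over $F$.

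\emph{Case $p\neq0$.} Since squaring is bijective on $F$, there is a unique $d$ with $d^2=p$, and $d\neq0$. Substitute $y=dx$:
\[
f(dx)=d^3x^3+pdx+e=d^3\bigl(x^3+x+e/d^3\bigr),
\]
using $pd=d^3$. The map $x\mapsto dx$ is a bijection of $F$, so it carries distinct roots to distinct roots. Hence $f$ splits into distinct linear factors exactly when $x^3+x+c$ does, with $c=e/d^3=e/(dp)$.

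\emph{The criterion for $x^3+x+c$, $c\neq0$.} First suppose the cubic splits distinctly with roots $\zeta,\zeta',\zeta''$. Then $c=\zeta^3+\zeta$, and $\zeta\notin\{0,1\}$ because $c\neq0$ (both $0$ and $1$ give $\zeta^3+\zeta=0$). Divide out the root $\zeta$:
\[
x^3+x+c=(x+\zeta)(x^2+\zeta x+\zeta^2+1).
\]
Substituting $x=\zeta z$ turns the quadratic factor into $\zeta^2(z^2+z+1+\zeta^{-2})$. Its two roots lie in $F$ and are distinct exactly when $\tr(1+\zeta^{-2})=0$; distinctness is automatic for $z^2+z+\text{const}$ in characteristic two. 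So the trace condition holds.

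For the converse, take $\zeta\in F\setminus\{0,1\}$ with $c=\zeta^3+\zeta$ and $\tr(1+\zeta^{-2})=0$. The same factorization gives the root $\zeta$ and two further distinct roots in $F$. It remains to check that none of these equals $\zeta$. A double root of $x^3+x+c$ would also be a root of the derivative $x^2+1$, hence equal to $1$. But then $c=1+1=0$, which is excluded. So the three roots are distinct.

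\emph{Case $p=0$.} Here the cubic is $y^3+e$ with $e\neq0$, and its derivative $y^2$ vanishes only at $0$, which is not a root. So the cubic is always separable, and it splits completely iff $F$ contains one cube root of $e$ together with all cube roots of unity.
\begin{itemize}
\item If $m$ is odd, then $3\nmid 2^m-1$, so $F$ has no primitive cube root of unity. The equation $y^3=e$ then has exactly one root in $F$, since cubing permutes $F^*$, and the cubic cannot split.
\item If $m$ is even, then $3\mid q-1$ and $F$ contains both primitive cube roots of unity. The cubic splits iff $e$ is a cube, and the roots are automatically distinct by separability.
\end{itemize}

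\emph{Case $e=0$.} Then $f=y(y^2+p)=y(y+\sqrt p)^2$ has a repeated root, so it is rejected.

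There is no real obstacle here. The step needing the most care is the distinctness of the roots in the criterion for $x^3+x+c$, which comes down to the discriminant-type observation that a double root must equal $1$ and hence force $c=0$.
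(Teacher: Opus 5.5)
Your proof is correct and follows the same route as the paper: the scaling \(y=dx\), factoring off a root \(\zeta\) as \((x+\zeta)(x^2+\zeta x+\zeta^2+1)\), reducing the quadratic factor to the Artin--Schreier equation with constant \(1+\zeta^{-2}\), and, for \(p=0\), using the fact that \(F\) contains the cube roots of unity exactly when \(m\) is even. You also make explicit the distinctness checks that the paper's short proof leaves implicit: \(x^3+x+c\) has no double root when \(c\neq0\), and \(y^3+e\) is separable when \(e\neq0\).
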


\begin{proof}
For \(p\neq0\), substitution \(y=dx\) gives
\(f(dx)=d^3(x^3+x+c)\).  If \(\zeta\) is a root, then
\[
 x^3+x+c=(x+\zeta)
 (x^2+\zeta x+\zeta^2+1),
\]
and the quadratic factor splits precisely when the Artin--Schreier
equation \(t^2+t+(1+\zeta^{-2})=0\) is solvable, which is the displayed
trace condition.  The \(p=0\) assertion follows from the presence of
all three cube roots of unity in \(F\), equivalently \(m\) even.
\end{proof}

For direct signature construction, write
\[
 \bfa^\T=(a,b)^\T,
 \qquad
 \alpha\eqdef a/s^3,
 \qquad
 \eta\eqdef b/s^5,
\]
and define
\begin{equation}
\label{eq:ca-completion-cubic}
 e\eqdef\delta+\alpha,
 \qquad \text{when }e\neq0,\qquad
 w\eqdef\frac{\eta+\alpha+\alpha^2}{e},
 \qquad
 p\eqdef1+w+e.
\end{equation}
A direct substitution in~\eqref{eq:ca-signature} gives
\[
 s\in\Sigma_{\bfa}(\delta)
 \quad\Longleftrightarrow\quad
 e\neq0\ \text{ and }\ y^3+py+e
 \text{ splits distinctly over }F.
\]
Thus Lemma~\ref{lem:ca-normalized-cubic} is an exact signature test,
not a heuristic filter.

\subsection{Frobenius transport and verifier correctness}

For \(\bfa^\T=(a,b)^\T\), write
\((\bfa^{[2]})^\T\eqdef(a^2,b^2)^\T\), and for
\(\mathcal U\subseteq F\), put
\(\mathcal U^2\eqdef\{u^2:u\in\mathcal U\}\).  Frobenius invariance of
\(\mathcal T\), together with
\(R_\delta(s)^{[2]}=R_{\delta^2}(s^2)\), gives
\begin{equation}
\label{eq:ca-raw-frobenius}
 \Sigma_{\bfa^{[2]}}(\delta^2)=
 \Sigma_{\bfa}(\delta)^2.
\end{equation}
Let \(\bfa_i^\T\) be the stored representative of the \(i\)-th deep
scaling orbit.  Frobenius induces a permutation \(\sigma\) of the orbit
indices; choose \(\lambda_i\in F^*\) such that
\begin{equation}
\label{eq:ca-representative-correction}
        (\bfa_i^{[2]})^\T
        =\lambda_i\cdot\bfa_{\sigma(i)}^\T,
\end{equation}
and abbreviate
\(Q_{i,j}(\delta)\eqdef Q_{\bfa_i,\bfa_j}(\delta)\).

\begin{lemma}[Stored-representative signature transport]
\label{lem:ca-frobenius-transport}
For every \(i,j\) and \(\delta\in\Delta\),
\begin{align}
\label{eq:ca-signature-transport}
 \Sigma_{\bfa_{\sigma(i)}}(\delta^2)
 &=\lambda_i^{-1}\Sigma_{\bfa_i}(\delta)^2,\\
\label{eq:ca-quotient-transport}
 Q_{\sigma(i),\sigma(j)}(\delta^2)
 &=\frac{\lambda_j}{\lambda_i}Q_{i,j}(\delta)^2.
\end{align}
These identities are independent of the choice of \(\lambda_i\) when
the image orbit has a nontrivial stabilizer.
\end{lemma}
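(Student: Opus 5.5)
The plan is to derive \eqref{eq:ca-signature-transport} from the raw Frobenius identity \eqref{eq:ca-raw-frobenius} and the scaling rule recorded in the proof of Lemma~\ref{lem:ca-signature-quotient}, namely \(\Sigma_{\mu\cdot\bfb}(\delta)=\mu\Sigma_{\bfb}(\delta)\). Then \eqref{eq:ca-quotient-transport} will follow formally from the definition of \(Q\). No new finite-field computation should be needed beyond verifying these two ingredients with the normalizations used here.

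First I would confirm \eqref{eq:ca-raw-frobenius} directly. Frobenius \(x\mapsto x^2\) is a field automorphism commuting with the column map, since \(h(x)^{[2]}=h(x^2)\) coordinatewise. Because it permutes \(F^*\), it maps three-distinct-locator representations to the same kind, so \(\mathcal T^{[2]}=\mathcal T\). It preserves \(\Delta\) because \(\tr(\delta^2)=\tr(\delta)\). Finally \(R_\delta(s)^{[2]}=(\delta^2s^6,(\delta^2+\delta^4)s^{10})=R_{\delta^2}(s^2)\). So \(s\in\Sigma_{\bfa}(\delta)\) holds iff \(R_{\delta^2}(s^2)+\bfa^{[2]}\in\mathcal T\), iff \(s^2\in\Sigma_{\bfa^{[2]}}(\delta^2)\).

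Next I would recheck the scaling rule from \eqref{eq:ca-scaling-action}: \(\mu\cdot R_\delta(s)=R_\delta(\mu s)\), and \(\mathcal T\) is scaling invariant, hence \(t\in\Sigma_{\bfb}(\delta)\) iff \(\mu t\in\Sigma_{\mu\cdot\bfb}(\delta)\). Apply this with \eqref{eq:ca-representative-correction}, which gives \(\bfa_{\sigma(i)}=\lambda_i^{-1}\cdot\bfa_i^{[2]}\):
\[
\Sigma_{\bfa_{\sigma(i)}}(\delta^2)=\lambda_i^{-1}\Sigma_{\bfa_i^{[2]}}(\delta^2)=\lambda_i^{-1}\Sigma_{\bfa_i}(\delta)^2 .
\]
This is \eqref{eq:ca-signature-transport}. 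For the quotient identity, set \(Q_{\sigma(i),\sigma(j)}(\delta^2)=\Sigma_{\bfa_{\sigma(i)}}(\delta^2)\Sigma_{\bfa_{\sigma(j)}}(\delta^2)^{-1}\) and substitute. The scalars give \(\lambda_i^{-1}\lambda_j\), and \(\mathcal U^2(\mathcal V^2)^{-1}=(\mathcal U\mathcal V^{-1})^2\) because squaring is a multiplicative bijection. Together these yield \eqref{eq:ca-quotient-transport}.

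The remaining step, and the only subtle one, is the independence claim. If the orbit of \(\bfa_{\sigma(i)}\) has a nontrivial stabilizer (the axis cases of Lemma~\ref{lem:ca-scaling-orbits}), then \(\lambda_i\) is determined only up to multiplication by a stabilizer element \(g\). By the scaling rule \(g\Sigma_{\bfa_{\sigma(i)}}(\delta)=\Sigma_{g\cdot\bfa_{\sigma(i)}}(\delta)=\Sigma_{\bfa_{\sigma(i)}}(\delta)\), so each signature is stable under multiplication by its stabilizer. Hence \(\lambda_i^{-1}\Sigma_{\bfa_i}(\delta)^2\) does not depend on the choice, since both choices equal the same left-hand side, which does not involve \(\lambda_i\). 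The same holds for \(Q\), whose left side likewise contains no \(\lambda\). I expect the main care point to be tracking whether \eqref{eq:ca-representative-correction} is read as \(\lambda_i\) acting on \(\bfa_{\sigma(i)}\), so that the inverse lands on the correct side.
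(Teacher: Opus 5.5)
Your proposal is correct and follows the paper's proof. You combine the raw Frobenius identity \eqref{eq:ca-raw-frobenius} with scaling covariance \(\Sigma_{\mu\cdot\bfb}(\delta)=\mu\Sigma_{\bfb}(\delta)\), applied to \(\bfa_{\sigma(i)}=\lambda_i^{-1}\cdot\bfa_i^{[2]}\), pass to quotient sets, and settle independence via stabilizer invariance. Beyond the paper, you verify both ingredients yourself (Frobenius invariance of \(\mathcal T\) with \(R_\delta(s)^{[2]}=R_{\delta^2}(s^2)\), and \(\mu\cdot R_\delta(s)=R_\delta(\mu s)\)), and your remark that the left-hand sides contain no \(\lambda_i\) already gives independence directly.
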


\begin{proof}
Scaling covariance and~\eqref{eq:ca-raw-frobenius} give
\[
 \Sigma_{\bfa_i}(\delta)^2
 =\Sigma_{\bfa_i^{[2]}}(\delta^2)
 =\lambda_i\Sigma_{\bfa_{\sigma(i)}}(\delta^2),
\]
which proves~\eqref{eq:ca-signature-transport}; taking quotient sets
gives~\eqref{eq:ca-quotient-transport}.  Two permissible correction
scalars differ by a stabilizer, under which the corresponding signature
is invariant.
\end{proof}

For each stored pair \((i,j)\), let
\[
 \mathcal Q_{i,j}\eqdef
 \bigcup_{\delta\in\Delta_{i,j}}Q_{i,j}(\delta),
\]
where \(\Delta_{i,j}\subseteq\Delta\) is the deterministic list of
trace-zero parameters scanned for that pair, and define its hard missing
set by
\begin{equation}
\label{eq:ca-hard-missing-set}
 \mathcal M_{i,j}\eqdef
 \left\{\mu\in F^*:
 \begin{array}{l}
 \bfa_i^\T,\ \mu\cdot\bfa_j^\T,\
 \bfa_i^\T+\mu\cdot\bfa_j^\T
 \text{ are distinct and deep},\\[-1mm]
 \mu\notin\mathcal Q_{i,j}
 \end{array}
 \right\}.
\end{equation}

\begin{theorem}[Verifier correctness]
\label{thm:ca-verifier-correctness}
Fix \(m\geq8\).  Suppose that an exact finite verification has the
following three properties.
\begin{enumerate}
\item The stored vectors \(\bfa_i^\T\) form a complete set of
representatives of the deep zero-first scaling orbits, including all
axis orbits, with their stabilizers recorded.
\item Every unordered pair, with repetition, of scaling orbits is
represented either directly or by one representative of its diagonal
Frobenius orbit.  In the latter case the verified scalars \(\lambda_i\)
satisfy~\eqref{eq:ca-representative-correction}, and quotient sets are
transported by~\eqref{eq:ca-quotient-transport}, with inversion when an
ordered lift is reversed.
\item Every signature is evaluated exactly and exhaustively.  The
implementation uses either direct pair--triple enumeration based on
\eqref{eq:ca-signature}, or the equivalent cubic test given by
\eqref{eq:ca-completion-cubic} and
Lemma~\ref{lem:ca-normalized-cubic}.  For every checked pair
representative, the output satisfies
\[
        \mathcal M_{i,j}=\varnothing.
\]
\end{enumerate}
Then every all-zero-first \(555\) plane has two weight-five
representatives whose support union has size at most eight.  Consequently,
\[
        R_2(C_m)\leq8.
\]
\end{theorem}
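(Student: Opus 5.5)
The argument reduces to showing that every all-zero-first \(555\) plane \(W\) is handled by some stored orbit pair with empty hard missing set. Then Lemma~\ref{lem:ca-zero-first-555} (valid since \(m\geq8\)) gives \(d_2\leq8\) for every other pair. The cases of dimension at most one are trivial, as in the proof of Theorem~\ref{thm:fl-stable-range}. So the bound \(R_2(C_m)\leq8\) follows once \(W\) is treated.

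\textbf{Step 1: reduce \(W\) to representatives.} Write \(W=\{\bzero^\T,\bfa^\T,\bfb^\T,(\bfa+\bfb)^\T\}\), with all three nonzero elements in \(\mathcal D\). By hypothesis~(1), \(\bfa^\T=\lambda\cdot\bfa_i^\T\) and \(\bfb^\T=\lambda'\cdot\bfa_j^\T\) for some orbit indices \(i,j\) and scalars \(\lambda,\lambda'\). Locator scaling by \(\lambda^{-1}\) is a bijection on locators and preserves \(\mathcal D\), \(\mathcal T\), and support sizes. Hence it suffices to treat the plane spanned by \(\bfa_i^\T\) and \(\mu\cdot\bfa_j^\T\) with \(\mu=\lambda'/\lambda\). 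If \((i,j)\) is stored only in reversed order, I swap the roles of \(\bfa\) and \(\bfb\), which replaces \(\mu\) by its inverse up to rescaling. Since \(\bfa_i^\T\), \(\mu\cdot\bfa_j^\T\), and their sum are distinct and deep, \(\mu\) satisfies the first line of~\eqref{eq:ca-hard-missing-set}. If \(\lambda\) is not unique because of a nontrivial stabilizer, any choice works.

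\textbf{Step 2: pairs checked directly.} If the pair \((i,j)\) is checked directly, hypothesis~(3) gives \(\mathcal M_{i,j}=\varnothing\). Therefore \(\mu\in\mathcal Q_{i,j}\), so \(\mu\in Q_{i,j}(\delta)\) for some \(\delta\in\Delta_{i,j}\subseteq\Delta\). The exactness of the signature computation, via the cubic test in \eqref{eq:ca-completion-cubic} and Lemma~\ref{lem:ca-normalized-cubic}, ensures that this computed set is the true \(Q_{i,j}(\delta)\). Lemma~\ref{lem:ca-signature-quotient} then supplies weight-five leaders sharing two locators, with union of size at most eight. Their symmetric difference covers the third syndrome, and scaling back gives the claim for \(W\).

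\textbf{Step 3: pairs handled by Frobenius transport.} Suppose instead the pair is represented through its diagonal Frobenius orbit, say \((\sigma^k(i),\sigma^k(j))=(i',j')\) with \((i',j')\) checked. I would use Frobenius as the map \((a,b)\mapsto(a^2,b^2)\) on the plane. It is induced by the bijection \(x\mapsto x^2\) of locators, so it preserves depth and support sizes. Hence it suffices to handle the Frobenius image of \(W\). The image plane is spanned by the representatives \(\bfa_{i'}^\T\) and \(\mu'\cdot\bfa_{j'}^\T\), where \(\mu'\) is obtained from \(\mu\) by squaring and the correction factors \(\lambda\) from~\eqref{eq:ca-representative-correction}. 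Iterating~\eqref{eq:ca-quotient-transport} shows that membership of \(\mu'\) in the transported quotient set is equivalent to membership in the checked set, and Lemma~\ref{lem:ca-frobenius-transport} says the transported set does not depend on the choice of \(\lambda\). Emptiness of \(\mathcal M_{i',j'}\) then covers the image plane, and pulling back by the inverse Frobenius covers \(W\).

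\textbf{Main obstacle.} The delicate part is the bookkeeping in Step~3. Three things must be tracked consistently: the correction scalars \(\lambda_i\), the inversion when an ordered lift is reversed, and the stabilizer ambiguity on the axis orbits. The goal is to confirm that the hard-missing-set condition for the stored representative really does transport to every pair in its Frobenius orbit. I would handle this by proving, by induction on \(k\), a one-step transport statement for the hard missing sets: \(\mathcal M_{\sigma(i),\sigma(j)}\) equals \((\lambda_j/\lambda_i)\mathcal M_{i,j}^2\), computed with the full \(\Delta\) transported under \(\delta\mapsto\delta^2\).
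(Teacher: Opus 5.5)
Your proposal is correct and follows the paper's own argument: a global locator scaling reduces the plane to one spanned by \(\bfa_i^\T\) and \(\mu\cdot\bfa_j^\T\). Directly checked pairs are then settled by \(\mathcal M_{i,j}=\varnothing\) with Lemma~\ref{lem:ca-signature-quotient}, and Frobenius-represented pairs through the squaring bijection on locators and the correction scalars \(\lambda_i\). Your forward push of \(W\) onto the checked representative is the contrapositive of the paper's remark that a hard missing scaling at the image would pull back to the checked representative, so the induction on the transport of \(\mathcal M\) that you flag as the main obstacle is sound but not strictly needed, since \(\mathcal M_{i',j'}=\varnothing\) is verified directly at that representative.
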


\begin{proof}
Let
\(\{\bzero^\T,\bfa^\T,\bfb^\T,(\bfa+\bfb)^\T\}\)
be an all-zero-first \(555\) plane.  A global locator scaling sends
\(\bfa^\T\) to some \(\bfa_i^\T\), while the image of \(\bfb^\T\) has
the form \(\mu\cdot\bfa_j^\T\).  For a directly checked pair,
\(\mathcal M_{i,j}=\varnothing\) implies
\(\mu\in\mathcal Q_{i,j}\), so
Lemma~\ref{lem:ca-signature-quotient} gives the required common locator
pair.  For a Frobenius image, equations
\eqref{eq:ca-signature-transport}--\eqref{eq:ca-quotient-transport}
transport both the quotient and its complement bijectively (and pair
reversal only inverts \(\mu\)); hence a hard missing scaling at the
image would pull back to one at its checked representative.  Thus every
all-zero-first \(555\) plane is covered, and
Lemma~\ref{lem:ca-zero-first-555} gives the conclusion for all syndrome
pairs.
\end{proof}

The condition in Theorem~\ref{thm:ca-verifier-correctness} is precisely
\(\mathcal M_{i,j}=\varnothing\), not
\(\mathcal Q_{i,j}=F^*\).  A relative scaling outside
\(\mathcal Q_{i,j}\) is irrelevant when the third syndrome is shallow
or the pair is degenerate.  This distinction is essential: the finite
certificates establish the former condition, which is both sufficient
and exactly what the theorem uses.

\subsection{Finite results}

For \(m=5,6,7\), whole-space computations enumerate all supports of
size at most four, use the known ordinary covering radius five
\cite{VanDerHorstBerger1976,AssmusMattson1976,Helleseth1978} to identify
the deep syndromes, and test every \(455\) and \(555\) plane.  For
\(m=8\), Lemma~\ref{lem:ca-zero-first-555} reduces the computation to a
direct scan of the all-zero-first \(555\) planes.  For
\(9\leq m\leq16\), the certificates satisfy
Theorem~\ref{thm:ca-verifier-correctness}.  The exact terminal data are
summarized in the single table below.

For the orbit-based rows, \(N_{\mathrm{orb}}\) denotes the number of deep
zero-first scaling orbits.

\begin{table}[ht]
\centering
\small
\begin{tabular}{@{}clrrr@{}}
\toprule
\(m\)&system&\(N_{\mathrm{orb}}\)&checked (expanded)&terminal\\
\midrule
 5&whole-space planes&--&\(16{,}021{,}544/43{,}747{,}200\)&0\\
 6&whole-space planes&--&\(123{,}921{,}896/1{,}112{,}999{,}328\)&0\\
 7&whole-space planes&--&\(28{,}914{,}344/519{,}851{,}640\)&0\\
 8&zero-first planes&--&\(416{,}704{,}680\)&0\\
 9&scaling-orbit pairs&428&\(91{,}806\)&0\\
10&scaling-orbit pairs&856&\(366{,}796\)&0\\
11&scaling-orbit pairs&1,708&\(1{,}459{,}486\)&0\\
12&scaling-orbit pairs&3,420&\(5{,}849{,}910\)&0\\
13&Frobenius-pair orbits&6,828&\(1{,}793{,}406\ (23{,}314{,}206)\)&0\\
14&Frobenius-pair orbits&13,656&\(6{,}661{,}282\ (93{,}249{,}996)\)&0\\
15&Frobenius-pair orbits&27,308&\(24{,}858{,}504\ (372{,}877{,}086)\)&0\\
16&Frobenius-pair orbits&54,620&\(93{,}233{,}211\ (1{,}491{,}699{,}510)\)&0\\
\bottomrule
\end{tabular}
\caption{Exact finite verification.  For \(m=5,6,7\), the two checked
counts are the numbers of \(555\) and \(455\) planes, and the terminal
entry counts uncovered planes.  For \(m=8\), it counts uncovered
all-zero-first \(555\) planes.  For \(m\geq9\), it is
\(\sum_{i,j}|\mathcal M_{i,j}|\).  For \(m=13,14,15,16\), the
parenthetical count is the number of unordered scaling-orbit pairs
after expansion of the checked diagonal Frobenius orbits.}
\label{tab:ca-finite-certificates}
\end{table}

The complete source, exact outputs, manifests, certificates, and the
stand-alone \(m=16\) aggregator are preserved in the archival
record~\cite{Yohananov2026VerificationArtifact}.  That record documents
the field models and implementation checks and verifies, in particular,
that the \(m=16\) certificate intervals are gap-free and overlap-free,
cover the full representative-index interval, and have the displayed total
representative count and expanded orbit weight.  These implementation
records support the three mathematical hypotheses above; no sampled,
diagnostic, or incomplete run is used in the proof.


\section*{Data availability}

The exact release supporting the finite-range part of this paper is
Version~1.0.0 in the Zenodo repository
\href{https://doi.org/10.5281/zenodo.21796632}
{\nolinkurl{10.5281/zenodo.21796632}}~\cite{Yohananov2026VerificationArtifact}.
SHA-256 checksums verify file integrity.  The deterministic finite-field
programs check the mathematical predicates used in the finite proof, and
the stand-alone \(m=16\) certificate aggregator checks the gap-free and
overlap-free interval partition and recomputes the full representative
count and expanded orbit-weight total.

\bibliographystyle{elsarticle-num}
\bibliography{references}

\end{document}